\DeclareRobustCommand{\frcshape}{\fontfamily{frc}\selectfont}
\DeclareTextFontCommand{\textfrc}{\frcshape}
\newcommand{\mathfrc}[1]{\mbox{\frcshape #1}}
\def\MdR{\ensuremath{\mathbb{R}}}
\newcommand{\set}[1]{\left\{ #1\right\}}
\newcommand{\sodass}{\,:\,}
\newcommand{\setGilt}[2]{\left\{ #1\sodass #2\right\}}
\newcommand{\While}    {{\bf while\ }}
\newcommand{\Do}       {{\bf do\ }}
\newcommand{\Is}       {:=}
\newcommand{\ie}{i.\,e.,\xspace}
\newcommand{\etal}{et~al.\xspace}
\newif\ifFull
\begin{document}

\newcommand\relatedversion{}
\renewcommand\relatedversion{\thanks{The full version of the paper can be accessed at \protect\url{https://arxiv.org/abs/1902.09310}}} %

\title{\Large Faster Local Motif Clustering via Maximum Flows}
\author{Adil Chhabra\thanks{Heidelberg University, Germany.}
\and Marcelo Fonseca Faraj\thanks{Heidelberg University, Germany.}
\and Christian Schulz\thanks{Heidelberg University, Germany.}}

\date{}

\maketitle

\pagenumbering{arabic}

\begin{abstract} \small\baselineskip=9pt 	Local clustering aims to identify a cluster within a given graph that includes a designated seed node or a significant portion of a group of seed nodes. This cluster should be well-characterized, i.e.,~it has a high number of internal edges and a low number of external edges. 
In this work, we propose \texttt{SOCIAL}, a novel algorithm for local motif clustering which optimizes for motif conductance based on a local hypergraph model representation of the problem and an adapted version of the max-flow quotient-cut improvement algorithm~(\texttt{MQI}). 
In our experiments with the triangle motif, \texttt{SOCIAL} produces local clusters with an average motif conductance lower than the state-of-the-art, while being up to \hbox{multiple orders of magnitude faster}.
\end{abstract}

\section{Introduction}
\label{sec:introduction}

Graphs are fundamental for representing complex systems and relationships in a wide range of contexts. 
They can be used to model everything from data dependencies and social networks to web links and email interactions. 
With the massive expansion of data in recent years, many real-world graphs have grown to enormous sizes, making it challenging to analyze and understand them. 
In particular, many applications only require analyzing a small, localized portion of a graph rather than the entire graph, which is the case for community-detection on Web~\cite{epasto2014reduce} and social~\cite{jeub2015think} networks as well as structure-discovery in bioinformatics~\cite{voevodski2009spectral} networks, among others.
Those real-world applications are usually preceded by or modeled as a \emph{local clustering}. 
Local clustering aims at identifying a specific cluster within a given graph that includes a designated seed node or a portion of a group of seed nodes, and is \emph{well-characterized}, i.e., it consists of many internal edges and few external edges.
More specifically, the quality of a community can be quantified by specific metrics such as \emph{conductance}~\cite{kannan2004clusterings}.
Since minimizing conductance is NP-hard~\cite{wagner1993between}, approximative and heuristic approaches are used in practice.
Given the nature and scale of this problem, these approaches should ideally require time and memory usage dependent only \hbox{on the size of the found cluster}.

The local clustering problem has been investigated both theoretically~\cite{andersen2006local} and experimentally~\cite{leskovec2009community}, and has been solved using a wide variety of techniques, including statistical~\cite{chung2013solving,kloster2014heat}, numerical~\cite{li2015uncovering,mahoney2012local}, and combinatorial~\cite{orecchia2014flow,fountoulakis2020flowbased} methods.
While traditional approaches to local clustering typically consider the edge distribution when evaluating the quality of a local community, novel methods~\cite{yin2017local,zhang2019local,meng2019local,murali2020online,LocMotifClusHyperGraphPartition} have shifted focus to finding local communities based on the distribution of \emph{motifs}, higher-order structures within the graph.
These works provide empirical evidence that this approach, which can be called \emph{local motif clustering}, is effective at detecting high-quality local communities.
Nevertheless, since this local clustering perspective is relatively new, there are still many opportunities to improve upon current approaches and discover more efficient algorithms for finding high-quality solutions.

\vspace*{-0.2cm}
\subsubsection*{Contribution.} In this work, we propose a novel algorithm for local motif clustering which optimizes for motif conductance by combining the strongly local hypergraph model from~\hbox{\citet{LocMotifClusHyperGraphPartition}} with an adapted version of the fast and effective algorithm \emph{max-flow quotient-cut improvement}~(\texttt{MQI})~\cite{mqipaper2004}.
Our algorithm \texttt{SOCIAL}, which stands for \emph{faSter mOtif Clustering vIa mAximum fLows}, starts by building a hypergraph model which is an exact representation for the motif-distribution around the seed node on the original graph~\cite{LocMotifClusHyperGraphPartition}.
Using this model, we create a flow model in which certain cuts correspond one-to-one with sub-sets of the initial cluster that include the seed node and have lower motif conductance than that of the whole cluster. 
We then use a push-relabel algorithm to either find such a cut and repeat the process recursively, or to prove that the current cluster is optimal among all its sub-clusters containing the seed node. 
In our experiments with the triangle motif, \texttt{SOCIAL} produces communities with a motif conductance value that is lower than the state-of-the-art, while also being up to multiple orders of magnitude faster.

\ifFull

In this work, we employ sophisticated combinatorial algorithms as a tool to solve the local motif clustering problem.
We propose two algorithms, one uses a graph model and the other one uses a hypergraph model. 
Our algorithm starts by building a (hyper)graph model which represents the motif-distribution around the seed node on the original graph.
While the graph model is exact for motifs of size at most three, the hypergraph model works for arbitrary motifs and is designed such that an optimal solution in the (hyper)graph model minimizes the motif conductance in the original network.
The (hyper)graph model is then partitioned using a powerful multi-level hypergraph or graph partitioner in order to directly minimize the motif conductance of the corresponding partition in the original graph.
Extensive experiments evaluate the trade-offs between the two different models. Moreover, when using the graph model for triangle motifs, our algorithm computes communities that have on average one third of the motif conductance value than communities computed by \texttt{MAPPR} while being $6.3$ times faster on average and removing the necessity of a preprocessing motif-enumeration on the whole network.

\fi

\section{Preliminaries}
\label{sec:preliminaries}

\label{subsec:basic_concepts}

\subsubsection*{Graphs.}
Let $G=(V=\{0,\ldots, n-1\},E)$ be an \emph{undirected graph} with no multiple or self edges allowed, such that $n = |V|$ and $m = |E|$.
Let $c: V \to \MdR_{\geq 0}$ be a node-weight function, and let $\omega: E \to \MdR_{>0}$ be an edge-weight function.
We generalize $c$ and $\omega$ functions to sets, such that $c(V') = \sum_{v\in V'}c(v)$ and $\omega(E') = \sum_{e\in E'}\omega(e)$.
Let $N(v) = \setGilt{u}{\set{v,u}\in E}$ be the \emph{open neighborhood} of $v$, and let $N[v]=N(v) \cup \{v\}$ be the \emph{closed neighborhood} of $v$.
We generalize the notations $N(.)$ and $N[.]$ to sets, such that $N(V') = \cup_{v\in V'}N(v)$ and $N[V'] = \cup_{v\in V'}N[v]$.
A graph $G'=(V', E')$ is said to be a \emph{subgraph} of $G=(V, E)$ if $V' \subseteq V$ and $E' \subseteq E \cap (V' \times V')$. 
When $E' = E \cap (V' \times V')$, $G'$ is the subgraph \emph{induced} in $G$ by $V'$.
Let $\overline{V'} = V \setminus V'$ be the \emph{complement} of a set $V' \subseteq V$ of nodes. 
Let a \emph{motif} $\mu$ be a connected graph.
\emph{Enumerating} the motifs $\mu$ in a graph $G$ consists building the collection $M$ of all occurrences of $\mu$ as a subgraph of $G$.
Let $d(v)$ be the \emph{degree} of node $v$ and $\Delta$ be the maximum degree of $G$.
Let $d_\omega(v)$ be the \emph{weighted degree} of a node $v$ and $\Delta_\omega$ be the maximum weighted degree of $G$.
Let $d_\mu(v)$ be the \emph{motif~degree} of a node~$v$, i.e., the number of motifs $\mu \in M$ which contain $v$.
We generalize the notations $d(.)$, $d_\omega(.)$, and $d_\mu(.)$ to sets, such that the \emph{volume} of~$V'$ is~$d(V') = \sum_{v\in V'}d(v)$, the \emph{weighted~volume} of~$V'$ is~$d_\omega(V') = \sum_{v\in V'}d_\omega(v)$, and the \emph{motif~volume} of~$V'$ is~$d_\mu(V') = \sum_{v\in V'}d_\mu(v)$.
Let a \emph{spanning forest} of $G$ be an acyclic subgraph of $G$ containing all its nodes.
Let the \emph{arboricity} of $G$ be the minimum amount of spanning forests of $G$ \hbox{necessary to cover all its edges}.

\vspace*{-0.2cm}
\subsubsection*{Local Motif Clustering.}
In the \emph{local graph clustering} problem, a graph $G=(V,E)$ and a seed node $u \in V$ are taken as input and the goal is to detect a \emph{well-characterized cluster} (or \emph{community}) $C \subset V$ containing~$u$.
A high-quality cluster $C$ usually contains nodes that are densely connected to one another and sparsely connected to $\overline{C}$.
There are many functions to quantify the quality of a cluster, such as \emph{modularity}~\cite{brandes2007modularity} and \emph{conductance}~\cite{kannan2004clusterings}.
The conductance metric is defined as $\phi(C)=|E'|/\min(d(C),d(\overline{C}))$, where $E'=E \cap (C \times \overline{C})$ is the set of edges shared by a cluster $C$ and its complement.
\emph{Local motif graph clustering} is a generalization of local graph clustering where a motif $\mu$ is taken as an additional input and the computed cluster optimizes a clustering metric based on $\mu$.
In particular, the \emph{motif conductance} $\phi_\mu(C)$ of a cluster $C$ is defined by 
\hbox{\citet{benson2016higher}} as a generalization of the conductance in the following way:
$\phi_\mu(C)=|M'|/min(d_\mu(C),d_\mu(\overline{C}))$, where $M'$ are all the motifs $\mu$ which contain at least one node in $C$ and one node in $\overline{C}$.
Note that, if the motif under consideration is simply an edge, then $|M'|$ is the \hbox{edge-cut and $\phi_\mu(C)=\phi(C)$}.

\vspace*{-0.2cm}
\subsubsection*{Hypergraphs.}
Let $H=(\mathcal{V}=\{0,\ldots, \mathfrc{n}-1\},\mathcal{E})$ be an \emph{undirected hypergraph} with no multiple or self hyperedges allowed, with $\mathfrc{n} = |\mathcal{V}|$ \emph{nodes} and $\mathfrc{m} = |\mathcal{E}|$ \emph{hyperedges} (or \emph{nets}).
A net is defined as a subset of $\mathcal{V}$.
The nodes that compose a net are called \emph{pins}.
Let $\mathfrc{c}: \mathcal{V} \to \MdR_{\geq 0}$ be a node-weight function, and let $\mathfrc{w}: \mathcal{E} \to \MdR_{>0}$ be a net-weight function.
We generalize $\mathfrc{c}$ and $\mathfrc{w}$ functions to sets, such that $\mathfrc{c}(\mathcal{V}') = \sum_{v\in \mathcal{V}'}\mathfrc{c}(v)$ and $\mathfrc{w}(\mathcal{E}') = \sum_{e\in \mathcal{E}'}\mathfrc{w}(e)$.
A node $v\in\mathcal{V}$ is \emph{incident} to a net $e\in\mathcal{E}$ if $v \in e$.
Let~$\mathcal{I}(v)$ be the set of incident nets of~$v$, let~$\mathfrc{d}(v) \Is |\mathcal{I}(v)|$ be the \emph{degree} of~$v$, and let $\mathfrc{d}_{\mathfrc{w}}(v) \Is \mathfrc{w}(\mathcal{I}(v))$ be the \emph{weighted degree} of $v$.
We generalize the notations $\mathfrc{d}(.)$ and  $\mathfrc{d}_{\mathfrc{w}}(.)$ to sets, such that the \emph{volume} of~$\mathcal{V}'$~is~$\mathfrc{d}(\mathcal{V}') = \sum_{v\in \mathcal{V}'}\mathfrc{d}(v)$ and the \emph{weighted~volume} of~$\mathcal{V}'$ is~$\mathfrc{d}_{\mathfrc{w}}(\mathcal{V}') = \sum_{v\in \mathcal{V}'}\mathfrc{d}_{\mathfrc{w}}(v)$.
Two nodes are \emph{adjacent} if they are incident to a same net.
Let the number of pins~$|e|$ in a net~$e$ be the \emph{size} of~$e$.
We define the \emph{contraction} operator as~$\big/$ such that $H \big/ \mathcal{V}^\prime$, with $\mathcal{V}^\prime \subseteq \mathcal{V}$, is the hypergraph obtained by contracting the nodes from~$\mathcal{V}^\prime$ on~$H$.
This contraction consists of substituting all the nodes in $\mathcal{V}^\prime$ by a single representative node $x$, removing nets totally contained in $\mathcal{V}^\prime$, and substituting all the pins in~$\mathcal{V}^\prime$ by a single pin~$x$ in each of the remaining nets.
Given a cluster~$\mathcal{V}^\prime~\subseteq~\mathcal{V}$, the \emph{cut} or \emph{cut-net}~$cut(\mathcal{V}^\prime)$ of~$\mathcal{V}^\prime$ consists of the total weight of the nets crossing the cluster, i.e., $cut(\mathcal{V}^\prime) = \sum_{e \in \mathcal{E}^\prime}\mathfrc{w}(\mathcal{E}^\prime)$, in which $\mathcal{E}^\prime \Is $ $\big\{e \in \mathcal{E} : \exists i,j \mid e \cap \mathcal{V^\prime} \neq \emptyset, e \cap \mathcal{\overline{V~\prime}} \neq \emptyset , i\neq j\big\}$.

\vspace*{-0.2cm}
\subsubsection*{Flows.}
Let~$G_f=(V_f,E_f)$ be a flow graph.
A flow graph has one source node~$s \in V_f$, one sink node~$t \in V_f$, and a set of remaining nodes $V \setminus \{s,t\}$.
All edges~$e=(u,v)$ in a flow graph are directed and associated with a nonnegative capacity~$cap(u,v)$.
An \hbox{s-t flow} is a function \hbox{$f:V_f\times{V_f}\rightarrow\MdR_{>0}$} which satisfies a \emph{capacity} constraint, \ie $f(u,v) \leq cap(u,v)$, a \emph{symmetry} constraint, \ie $\forall{u,v}\in{V_f}: f(u,v)=-f(v,u)$, and a \emph{flow conservation} constraint, \ie \hbox{$\forall{u}\in{V_f}\setminus\{s,t\}:$} \hbox{$\sum_{v\in V_f}f(u,v)=0$}.
An edge~$(u,v)$ is called~\emph{saturated} if \hbox{$cap(u,v)=f(u,v)$};
The total amount of flow moved from~$s$~to~$t$ is defined as the \emph{value}~$|f|$ of~$f$ and is computed as follows: \hbox{$|f|=\sum_{u\in V_f}f(u,t)=\sum_{v\in V_f}f(s,v)$}.
A given \hbox{s-t~flow~$f$} in~$G_f$ is \emph{maximum} if, for any \hbox{s-t~flow}~$f'$ in~$G_f$,~$|f'|\leq|f|$.
Let~$G_r=(V_f,E_r)$ be the \emph{residual graph} associated with a given flow~$f$~on~$G_f$, such that $E_r=$ \hbox{$\{(u,v) \in V_f \times V_f: cap(u,v)-f(u,v)>0\}$}.
According to the Max-Flow Min-Cut Theorem~\cite{ford_fulkerson_1956}, the value~$|f|$ of a maximum s-t flow~$f$ on~$G_f$ equals the weight of a minimum s-t cut on~$G_f$, \ie a 2-way partition of~$G_f$ where edge weights equal edge capacities, $s$~and~$t$ are in distinct blocks, and the total weight of the cut edges is minimum.
To find the sink side of the minimum cut associated with a maximum flow in $G_f$, a reverse breadth-first search can be performed~on~$G_f$ \hbox{starting at the~sink~node~$t$}.

\vspace*{-0.3cm}
\subsubsection*{Push-Relabel.} 
For each node~$u$ in a flow graph~${V_f}$, let~\hbox{$exc(u)=\sum_{v\in V_f}f(u,v)$} be its \emph{excess} value and~$d(u)$ be its potential.
A node~$u$ is called \emph{active} if \hbox{$exc(u)>0$}.
An edge~$(u,v)$ is called~\emph{admissible} if \hbox{$cap(u,v)-f(u,v)>0$} and~\hbox{$d(u)=d(v)+1$}.
The push-relabel~\cite{push_relabel88} algorithm builds a maximum flow by computing a succession of \emph{preflows}, \ie~flows where the flow conservation constraint is substituted by \hbox{$\forall{u}\in{V_f}\setminus\{s,t\}:$} \hbox{$exc(u)\geq0$}. 
In the initial preflow, all out-edges of~$s$ are saturated, $\forall{u}\in{V_f}\setminus\{s\}:d(u)=0$, and $d(s)=|V_f|$.
The initial preflow is evolved via operations \emph{push}, \ie sending as much flow as possible from an active node through an admissible edge, and \emph{relabel}, \ie increasing the potential of a node until it~becomes~active.
Preflows induce minimum sink-side cuts, so a maximum flow and a minimum cut are obtained once \hbox{no node is active.}

\subsubsection*{Flows on Hypergraphs.}
A common technique to solve flow and cut problems on hypergraphs consists of transforming them in directed graphs and then applying traditional graph-based techniques on~them.
Among the existing transformations~\cite{veldt2022hypergraph,lawler1973cutsets}, we highlight \emph{clique expansion}, \emph{star expansion}, and \emph{Lawler expansion}.
In the \emph{clique expansion}, each net is represented by a clique, i.e., a set of edges connecting each pair of its pins in both directions.
In this approach, the weight of each edge is equal to weight of the corresponding net~$e$ divided by~$|e|-1$ and parallel edges are substituted by a single edge whose weight is the sum of the weights of the removed edges.
In the \emph{star expansion}, each net is represented by an auxiliary artificial node connected to its pins by edges in both directions. 
In this expansion, the edges have the same weight as the corresponding net.
In the \emph{Lawler expansion}, each net~$e$is represented by two auxiliary artificial nodes~$w_1$~and~$w_1$ and a collection of edges.
In particular, there is a directed edge $(w_1,w_2)$ which has the same weight as the corresponding net.
Additionally, each pin of the corresponding net has an out-edge to~$w_1$ and an in-edge from~$w_2$, each of them with weight infinity.
The three transformation approaches are exemplified in Figure~\ref{fig:hyperedge_expansion}.

\begin{figure}[t]
	\centering
	\includegraphics[width=1\linewidth]{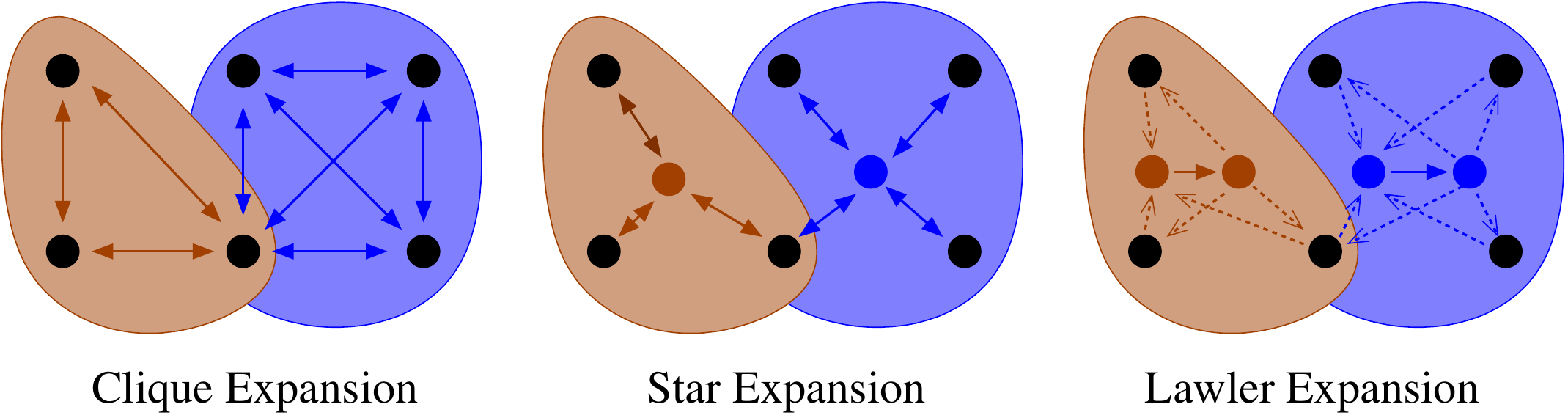}
	\vspace*{-0.5cm}
	\caption{Three existing approaches to represent a hypergraph using a directed graph. Nodes and nets of the original hypergraph are respectively represented by black circles and colored areas around them. Auxiliary artificial nodes and edges are respectively represented by circles and arrows, both with the same color as the corresponding net. Bidirectional arrows represent a pair of edges in both directions. Solid edges have finite weight while dashed edges have \hbox{infinite weight}.}
	\label{fig:hyperedge_expansion}
	\vspace*{-0.3cm}
\end{figure}

\subsection{Related Work.}
\label{subsec:related_work}

Motif-based clustering has been widely studied in the literature, with works such as~\cite{benson2015tensor, yin2017local,klymko2014using,prvzulj2007biological,tsourakakis2017scalable} partitioning all the nodes of a graph into clusters based on motifs.
We also address the topic of clustering based on motifs, but our focus is on identifying clusters in the immediate vicinity of a specific seed node, rather than on the entire graph.
Several works~\cite{kloster2014heat,li2015uncovering,mahoney2012local,cui2014local,sozio2010community} propose local clustering algorithms, but they do not focus on optimizing for motif-based metrics like our work. 
Instead, they use metrics based on edges, like conductance and modularity.
In this section, we review previous work on local clustering based on motifs, which is the focus~of~our~work.

\citet{rohe2013blessing} propose a local clustering algorithm based on triangle motifs.
Their algorithm starts by initializing a cluster containing only the seed node, and iteratively grows this cluster.
Particularly, the algorithm greedily inserts nodes contained in at least a predefined amount of cut triangles.
\hbox{\citet{huang2014querying}} recover local communities containing a seed node in online and dynamic setups based on higher-order graph structures named Trusses~\cite{cohen2008trusses}.
They define the $k$-truss of a graph as its largest subgraph whose edges are all contained in at least $(k-2)$ triangle motifs, hence trusses are a graph structure based on the frequency of triangles.
The authors use indexes to search for $k$-truss communities in time proportional to the size of the \hbox{recovered~community}. 

\citet{yin2017local}~propose \texttt{MAPPR}, a local motif clustering algorithm based on the Approximate Personalized PageRank~(\texttt{APPR}) method.
In a preprocessing phase, \texttt{MAPPR} enumerates the motif of interest in the entire input graph and constructs a weighted graph~$W$, in which edges only exist between nodes that appear in at least one instance of the motif, and their edge weight is equal to the number of occurrences of the motif containing these two endpoints.
Afterward, \texttt{MAPPR} uses an adapted version of the \texttt{APPR} method to find local communities in the weighted graph constructed in the preprocessing phase.
\texttt{MAPPR} is able to extract local communities from directed input graphs, something that cannot be \hbox{done using~\texttt{APPR}~alone}.

\citet{zhang2019local} propose \texttt{LCD-Motif}, an algorithm that addresses the local motif clustering problem using a modified version of the spectral method.
\texttt{LCD-Motif} has two main differences in comparison to the traditional spectral motif clustering method.
First, instead of computing singular vectors, the algorithm performs random walks to identify potential members of the searched cluster.
They use the span of a few dimensions of vectors, obtained through random walks, as an approximation for the local motif spectra.
Second, Instead of using $k$-means for clustering, \texttt{LCD-Motif} searches for the minimum 0-norm vector within the previously mentioned span, which must contain the seed nodes in its~support~vector.

\citet{meng2019local} propose \texttt{FuzLhocd}, a local motif clustering algorithm that uses fuzzy arithmetic to optimize a modified version of modularity.
Given seed node, \texttt{FuzLhocd} starts by detecting probable core nodes of the targeted local community using fuzzy membership.
After identifying the probable core nodes of the target local community using fuzzy membership, the algorithm expands these nodes using another fuzzy membership to form~a~cluster.

\citet{zhou2021high} propose \texttt{HOSPLOC}, a local motif clustering algorithm that uses a motif-based random walk to compute a distribution vector, which is then truncated and used in a vector-based partitioning method.
The algorithm begins by approximately estimating the distribution vector through a motif-based random walk. 
To further refine the computation and focus on the local region, \texttt{HOSPLOC} sets all small vector entries to 0. 
After this preprocessing step, the algorithm applies a vector-based partitioning method~\cite{spielman2013local} on the resulting distribution vector in order to \hbox{identify~a~local~cluster}.

\citet{shang2022local} propose \texttt{HSEI}, a local motif clustering algorithm that uses motif and edge information to grow a cluster from a seed node.
The algorithm begins by creating an initial cluster consisting of only the seed node.
It then adds nodes to the cluster from the seed's neighborhood, selecting them based on their motif degree.
The cluster is expanded using a motif-based extension of the~modularity~function.

\citet{LocMotifClusHyperGraphPartition} propose an algorithm to solve the local motif clustering problem using powerful (hyper)graph partitioning tools~\cite{kaffpa,schlag2016k,gottesburen2021scalable,DBLP:conf/esa/GottesburenH00S21}.
Their algorithm first uses a breadth-first search to select a ball containing the seed node and nearby nodes.
Next, they enumerate motif occurrences within the ball and build a (hyper)graph model which allows them to compute the motif conductance of any cluster within the ball.
They then partition their model into two blocks using a high-quality (hyper)graph partitioning algorithm, and refine the solution for motif conductance.

\section{Local Motif Clustering via Maximum Flows}
\label{sec:Faster Local Motif Clustering via Maximum Flows}

We now present the overall clustering strategy of \texttt{SOCIAL}, then we discuss its \hbox{algorithmic components}.

\subsection{Overall Strategy.}
\label{subsec:Overall Strategy}

Given a graph $G=(V,E)$, a seed node $u$, and a motif $\mu$, our strategy for local clustering is based on the following phases.
First, we select a set $S \subseteq V$ containing~$u$ and close-by nodes.
From now on, we refer to this set $S$ as a \emph{ball around}~$u$.
Second, we enumerate the collection $M$ of occurrences of the motif $\mu$ which contain at least one node in~$S$.
Third, we build a hypergraph model $H_\mu$ in such a way that the motif-conductance of any cluster~$C \subseteq S$ in~$G$ can be computed directly in~$H_\mu$.
Fourth, we set~$C_0=S$ as our initial cluster and use it to build our \texttt{MQI}-based~\cite{mqipaper2004} flow model~$G_f$ from the hypergraph model~$H_\mu$.
Fifth, we use~$G_f$ to either find a new cluster~$C \subset C_0$ containing~$u$ with strictly smaller motif conductance than~$C_0$ or prove that such cluster does not exist.
While~$C \subset C_0$ is found, we take it as our new initial cluster, rebuild~$G_f$, and repeat the previous~phase.
When eventually no such strict sub-set is found, the best obtained cluster is directly translated back to $G$ as a local cluster around the seed node.
Figure~\ref{fig:overall_algorithm} provides a comprehensive illustration of the consecutive phases of \texttt{SOCIAL}.
Note that there is no guarantee of finding the best overall cluster including~$u$ strictly contained in~$S$.
Instead, we find a succession of clusters with strictly decreasing cardinality and motif conductance until a local optimum~is~reached.
To better explore the vicinity of~$u$~in~$G$ and overcome the fact we only find clusters inside~$S$, we repeat the overall strategy~$\alpha$~times with distinct balls~$S$.
Our overall algorithm including the mentioned repetitions is outlined in Algorithm~\ref{alg:overall_strategy}.

\begin{figure*}[t]
	\centering
	\includegraphics[width=0.9\linewidth]{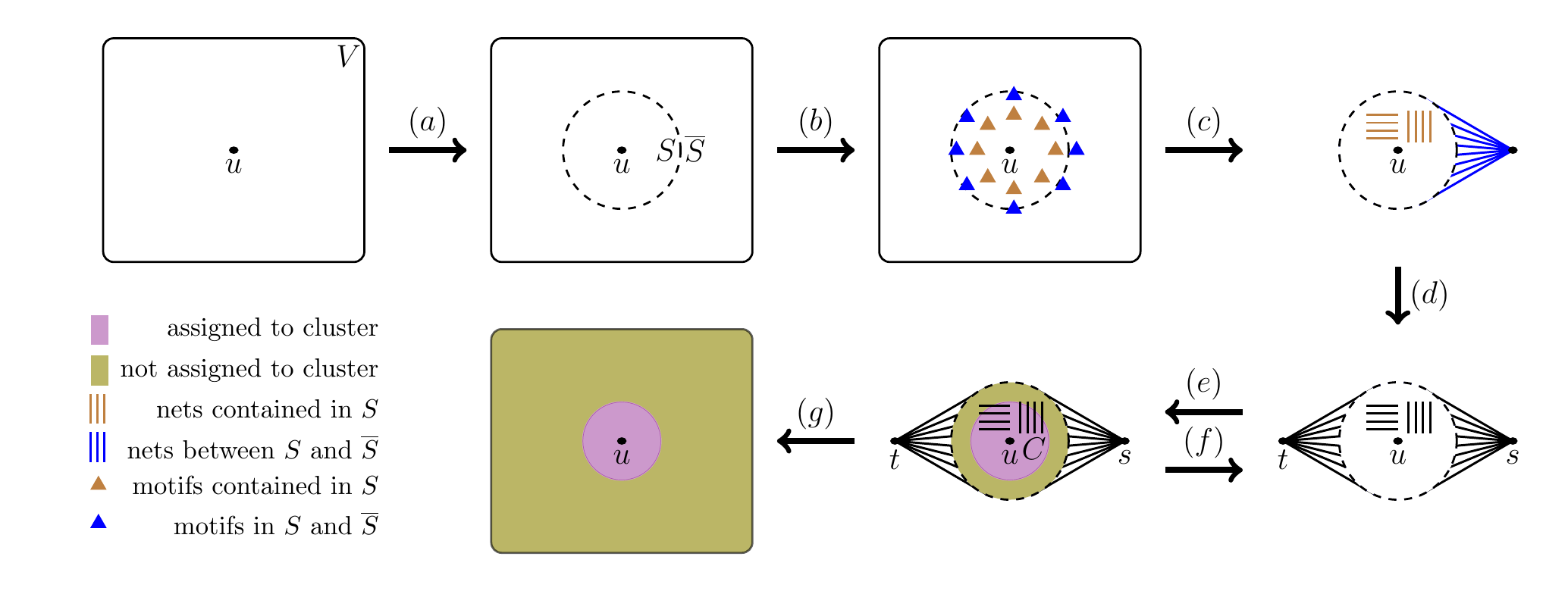}
	\vspace*{-0.3cm}
	\caption{Illustration of the phases of \texttt{SOCIAL}. (a)~Given a seed node $u$ and a graph $G$, a ball $S$ around $u$ is selected. (b)~Motif occurrences of $\mu$ with at least a node in $S$ are enumerated. (c)~The hypergraph model $H_\mu$ is built by converting motifs into nets and contracting $\overline{S}$ into a single node. The ball~$S$ is taken as the initial cluster~$C_0$. (d)~The flow model~$G_f$ is built based on~$C_0$~in~$H_\mu$. (e)~A cluster~$C \subseteq C_0$ containing~$u$ is found using maximum flows. (f)~While~$C \subset C_0$, the model~$G_f$ is rebuilt based on~$C$, which is taken as the initial cluster~$C_0$. (g)~When eventually~$C = S$,~$C$ is converted in a local cluster around the seed~node~in~$G$.}
	\label{fig:overall_algorithm}
\end{figure*}

\begin{algorithm}[t]
	\caption{Local Motif Clustering via Max Flows}
	\label{alg:overall_strategy}
	\hspace*{-0cm} \textbf{Input} graph $G=(V,E)$; seed node $u \in V$; motif $\mu$ \\
	\hspace*{-0cm} \textbf{Output} cluster $C^* \subseteq V$ 
	\begin{algorithmic}[1]  
		\STATE $C^* \leftarrow \emptyset$
		\FOR{$i=1,\ldots,\alpha$}
		\STATE Select ball $S$ around $u$
		\STATE $M \leftarrow$ Enumerate motifs in $S$			
		\STATE Build hypergraph model $H_\mu$ based on~$S$~and~$M$%
		\STATE $C \leftarrow S$
		\STATE\Do
		\STATE \hskip1.0em $C_0 \leftarrow C$
		\STATE \hskip1.0em Build flow model~$G_f$ based on~$C_0$~in~$H_\mu$
		\STATE \hskip1.0em Solve~$G_f$ to obtain cluster~$C \subseteq C_0$ including~$u$ \\
		\While{$C \subset C_0$}
		\IF{$C^* = \emptyset \lor \phi_\mu(C) <  \phi_\mu(C^*)$} 
		\STATE $C^* \leftarrow C$
		\ENDIF
		\ENDFOR
		\STATE Convert $C^*$ into a local motif cluster in $G$
	\end{algorithmic}

\end{algorithm}

\subsection{Hypergraph Model.}
\label{subsec:Hypergraph Model}

We follow the same procedure as~\hbox{\citet{LocMotifClusHyperGraphPartition}} to construct the hypergraph model~$H_\mu$.
To ensure a thorough understanding of our overall algorithm, we provide a summary of the phases involved, i.e., finding a ball around the seed node, enumerating motifs within it, and finally constructing the hypergraph model~$H_\mu$.

\subsubsection*{Ball around the Seed Node.}
Our approach to select a ball~$S$ is a fixed-depth breadth-first search (BFS) rooted on $u$.
More specifically, we compute the first $\ell$ layers of the BFS tree rooted on $u$, then we include all its nodes in~$S$.
For each of the $\alpha$ repetitions of the overall algorithm, we use different amounts $\ell$ of layers for a better algorithm exploration. 
Two special cases are handled by \texttt{SOCIAL}, namely a ball~$S$ that is either too small or disconnected from $\overline{S}$.
We avoid the first special case by ensuring that~$S$ contains $100$ or more nodes in at least one repetition of our overall algorithm.
More specifically, in case this condition is not automatically met, then we accomplish it in the last repetition by growing additional layers in our partial BFS tree while it contains fewer than $100$ nodes.
The number $100$ is based on the findings of \hbox{\citet{leskovec2009community}}, which show that most well characterized communities from real-world graphs have a relatively small size, in the order of magnitude of $100$ nodes.
If the second exceptional case happens, it means that the whole BFS tree rooted on the seed node has at most~$\ell$ layers.
In this case, we simply stop the algorithm and return the entire ball $S$, which corresponds to an optimal community with motif conductance $0$ provided that there is at least one motif in $S$.
The number~$\alpha$ of repetitions as well as the amount $\ell$ of layers used in each repetition are tuning~parameters.

\vspace*{-0.2cm}
\subsubsection*{Motif Enumeration.}
Although enumerating a general motif on some graph is NP-hard~\cite{read1977graph}, there are efficient heuristics to do it such as the one proposed by \hbox{\citet{kimmig2017shared}}.
Nevertheless, simpler motifs such as small paths, cycles, and cliques can be trivially enumerated in polynomial time.
We focus our enumeration phase on the triangle motif. 
We implement the simple and exact algorithm proposed by \hbox{\citet{chiba1985arboricity}} to enumerate the collection $M$ of occurrences of the motif $\mu$ which contain at least one node in~$S$.
Roughly speaking, this algorithm works by intersecting the neighborhoods of adjacent nodes.
For each node $v$, the algorithm starts by marking its neighbors with degree smaller than or equal to its own degree.
For each of these specific neighbors of $v$, it then scans its neighborhood and enumerates new triangles as soon as marked nodes are found.
The running time of this algorithm is $O(ma) = O(m^{\frac{3}{2}})$, where $a$ is the arboricity of the graph.
We apply this enumeration algorithm only on the subgraph induced in $G$ by $N[S]$, which is enough to find all triangles containing at least one node in $S$, as exemplified by transformation~(a) in Figure~\ref{fig:model_construction}. 
Assuming a constant-bounded arboricity, the overall cost of our motif-enumeration phase for triangles~is~$O\big(|N[S] \times N[S])\cap E|\big)$.

\vspace*{-0.2cm}
\subsubsection*{Hypergraph Model.} 
The hypergraph model~$H_\mu$ is finally built in two conceptual operations.
First, define a hypergraph containing $V$ as nodes and a set~$\mathcal{E}$ of nets such that, for each motif in~$M$, $\mathcal{E}$ has a net with pins equal to the endpoints of this motif. 
Then, we contract together all nodes in~$\overline{S}$ into a single node~$r$ and substitute parallel nets by a single net whose weight is equal to the summed weights of the removed parallel nets.
More formally, we define the hypergraph version of our model as $H_\mu = (S \cup \{r\},\mathcal{E})$ where the set~$\mathcal{E}$ of nets contains one net~$e$ associated with each motif occurrence $G'=(V',E') \in M$ such that $e = V'$ if $V'\subseteq S$, and $e = V' \cap S \cup \{r\}$ otherwise.
In the former case the net has weight~$1$, in the latter case the net has weight equal to the amount of motif occurrences in $M$ represented by it.
Since node weights in~$H_\mu$ are irrelevant for \texttt{SOCIAL}, the involved theorems, and the motif conductance metric, we make all node weights unitary in~$H_\mu$.
In practice, the model~$H_\mu$ can be built by instantiating the nodes in $S \cup \{r\}$ and the nets in $\mathcal{E}$.
Assuming that the number of nodes in $\mu$ is a constant, our model is built in time $O(|S|+|M|)$ and uses memory $O(|S|+|M|)$.
The construction of~$H_\mu$ is illustrated in transformation~(c) of Figure~\ref{fig:overall_algorithm} and demonstrated for a particular example in transformation~(b)~of~Figure~\ref{fig:model_construction}.
Theorem~\ref{theo:conductance_equivalence} shows that the motif conductance~in~$G$ of any cluster~$C \subseteq S$ can be directly computed from~$H_\mu$ assuming~$d_\mu(S) \leq d_\mu(\overline{S})$.
The assumption $d_\mu(S) \leq d_\mu(\overline{S})$ is fair in practice since the ball~$S$ computed via BFS tends to be considerably smaller than~$\overline{S}$ for huge sparse networks.
Enumerating the motifs in $\overline{S}$ is not reasonable for a local clustering algorithm, but we did verify that our assumption holds during all~our~experiments.

\begin{figure}[t]
	\centering
	\includegraphics[width=\linewidth]{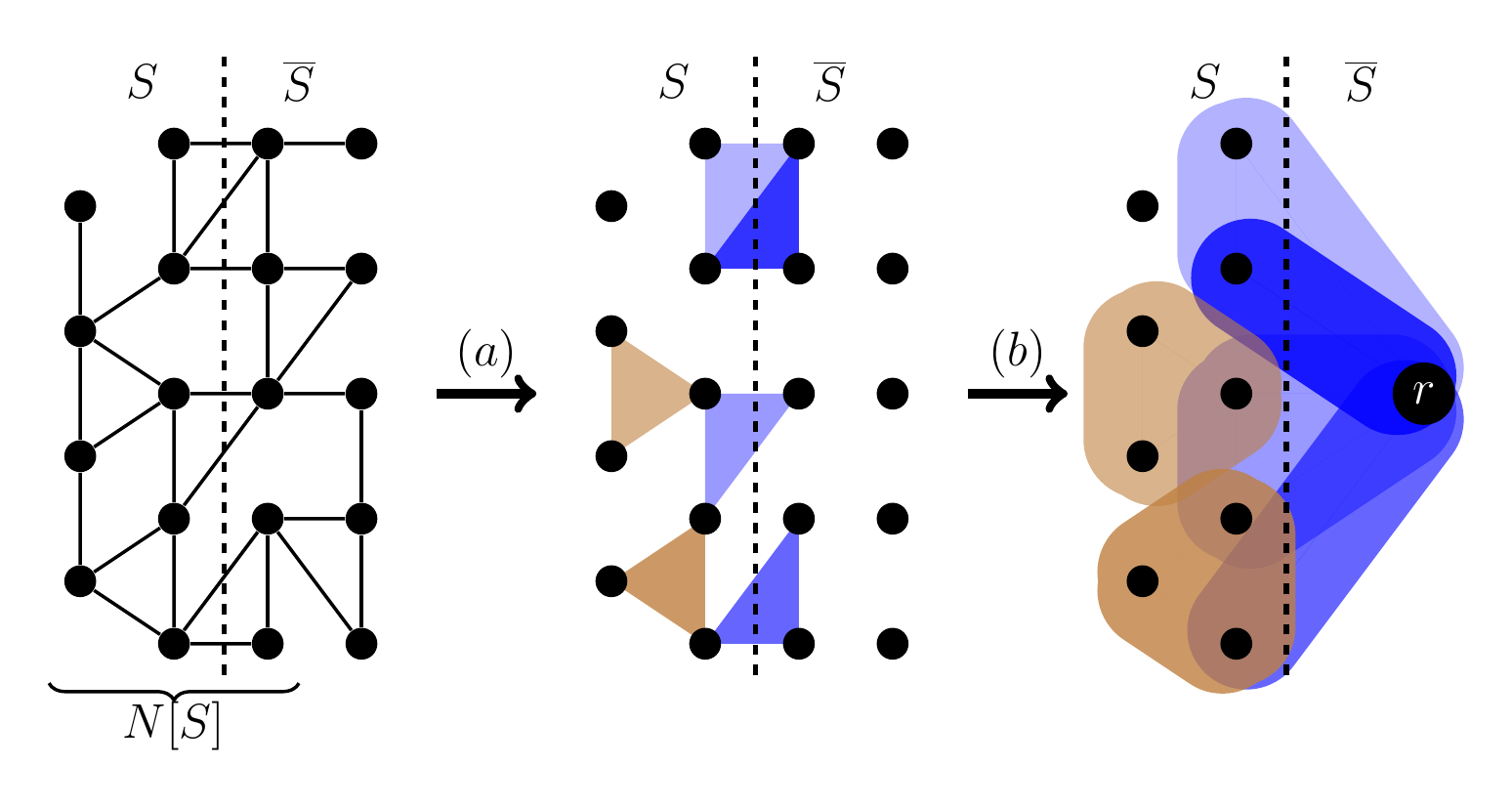}
	\vspace*{-0.6cm}
	\caption{Example of motif-enumeration and model-construction phases of \texttt{SOCIAL} for the triangle motif. In the left, the nodes of~$G$ are split into sets~$S$~and~$\overline{S}$. In the center, motif occurrences containing nodes in~$S$ are enumerated. In the right,~$H_\mu$ is built by converting motifs in nets and contracting~$\overline{S}$ into a node~$r$.}
	\label{fig:model_construction}
\end{figure}

\ifFull

Observing the relationship between $G$ and $H_\mu$, we can distinguish three groups of components.
The first group comprises nodes in $S$ and motifs with all endpoints in $S$, all of which are represented in $H_\mu$ without any contraction as nodes and nets.
The second group consists of nodes in $\overline{S}$ and motifs with all endpoints in $\overline{S}$, which are compactly represented in $H_\mu$ as the contracted node~$r$.
The third group comprises motifs with nodes in both $S$ and $\overline{S}$, all of which are abstractly represented in $H_\mu$ as nets containing individual pins in $S$ as well as the pin~$r$.
Summing up, our hypergraph model is a concise representation of the whole graph $G$ where relevant information for local motif clustering is emphasized in two perspectives:
Edges are omitted while motifs are made explicit and global information is abstracted while local information is preserved in detail.
Theorem~\ref{theo:conductance_equivalence} shows that the motif conductance of this equivalent partition of $G$ can be directly computed from $H_\mu$ assuming $d_\mu(S) \leq d_\mu(\overline{S})$.
Assuming $d_\mu(S) \leq d_\mu(\overline{S})$ is fair since $S$ is ideally much smaller than~$\overline{S}$. 
Enumerating the motifs in $\overline{S}$ is not reasonable for a local clustering algorithm, but we did verify that our assumption holds during all our experiments.

\begin{mytheorem}
	Any $k$-way partition $P$ of our hypergraph model $H_\mu$ corresponds to a unique $k$-way partition $P^\prime$ of $G$ such that the cut-net of $P$ is equal to the motif-cut of $P^\prime$, assumed an exact motif enumeration step.
	\label{theo:cut_equivalence}
\end{mytheorem}

\begin{proof}
	For simplicity, we prove the claim assuming that parallel nets are not substituted by a single net whose weight is equal to their summed weights.
	This proof directly extends to our model since the contribution of a contracted cut net to the overall cut-net equals the contribution of the parallel nets represented by it.
	Due to the design of our hypergraph model $H_\mu$, there is a direct correspondence between its nodes and the nodes of $G$.
	Hence, any partition $P$ of $H_\mu$ corresponds to a partition $P^\prime$ of $G$ where corresponding nodes are simply assigned to the same blocks.
	Since $\overline{S}$ is represented by the single node~$r$ in $H_\mu$, no motif occurrence totally contained in $\overline{S}$ can be cut in $P^\prime$.
	All the remaining motif occurrences in $G$ can be potentially cut in $P^\prime$, but these motif occurrences are bijectively associated with the nets of $H_\mu$ with a direct correspondence between motif endpoints in $G$ and net pins in $H_\mu$.
	As a consequence, a motif occurrence of $G$ is cut in $P^\prime$ if, and only if, the corresponding net in $H_\mu$ is cut in~$P$.
\end{proof}

\fi

\vspace*{-0.1cm}
\begin{theorem}[Theorem 3.2 from~\cite{LocMotifClusHyperGraphPartition}]
	The motif conductance $\phi_\mu(C)$ of a cluster $C \subseteq S$ in the original graph $G$ can be calculated directly in the hypergraph model~$H_\mu$ using the ratio of its cut-net~$cut(C)$ to its weighted volume $\mathfrc{d}_{\mathfrc{w}}(C)$, assuming that the motif enumeration step is exact and $d_\mu(S) \leq d_\mu(\overline{S})$.
	\label{theo:conductance_equivalence}
\end{theorem}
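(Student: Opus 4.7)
The plan is to establish the equality $\phi_\mu(C) = cut(C)/\mathfrc{d}_{\mathfrc{w}}(C)$ by separately matching the numerator and the denominator of the motif conductance definition $\phi_\mu(C) = |M'|/\min(d_\mu(C),d_\mu(\overline{C}))$ to the two quantities in the hypergraph. The core idea is that the construction of $H_\mu$ sets up a weighted bijection between enumerated motif occurrences and nets, so set-theoretic incidence relations on motifs translate into net-incidence relations on $H_\mu$.

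First I would handle the numerator by a case analysis on the three types of motif occurrences produced by the enumeration step. A motif occurrence fully inside $\overline{S}$ is not enumerated (and, even if it were, it would collapse into a self-loop on $r$ after contracting $\overline{S}$) and it also cannot be cut by any $(C,\overline{C})$ with $C\subseteq S$, so both sides agree. A motif fully contained in $S$ is represented by a net with pins exactly equal to its vertex set and weight $1$; this net is cut iff the motif has endpoints on both sides of $(C,\overline{C})$. A motif straddling $S$ and $\overline{S}$ is represented by a net containing its $S$-endpoints together with the pin $r$, and is always cut because $r\in \overline{C}$; each such motif contributes exactly one to $|M'|$ and its contribution is preserved by the parallel-net merging rule, which defines the merged net's weight as the sum of the contributing motifs' weights. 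Summing over the three categories gives $cut(C)=|M'|$.

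Next I would handle the denominator. The assumption $d_\mu(S)\le d_\mu(\overline{S})$, combined with $C\subseteq S$ (so $d_\mu(C)\le d_\mu(S)$) and $\overline{S}\subseteq\overline{C}$ (so $d_\mu(\overline{S})\le d_\mu(\overline{C})$), forces $\min(d_\mu(C),d_\mu(\overline{C}))=d_\mu(C)$. It then remains to show $d_\mu(C)=\mathfrc{d}_{\mathfrc{w}}(C)$. For any $v\in C\subseteq S$, the motifs in $M$ containing $v$ are in bijection with the nets of $H_\mu$ incident to $v$, because the only operations applied to nets are (i) contracting the pins in $\overline{S}$, which does not affect incidence at nodes of $S$, and (ii) merging parallel nets with summed weights, which is precisely what allows $\mathfrc{w}(\mathcal{I}(v))$ to count motifs with multiplicity. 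Hence $\mathfrc{d}_{\mathfrc{w}}(v)=d_\mu(v)$ for every $v\in C$, and summing over $C$ yields $\mathfrc{d}_{\mathfrc{w}}(C)=d_\mu(C)$.

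The main obstacle in this argument is the bookkeeping introduced by parallel-net merging: one must verify that it preserves contributions to both the cut and the weighted volume simultaneously, which is why I would conduct the numerator proof by case-splitting on motif types rather than by a direct net-by-net accounting. A secondary care point is verifying that the min is taken on the correct side, but this follows cleanly from the assumption $d_\mu(S)\le d_\mu(\overline{S})$ combined with $C\subseteq S$. Once these two facts are in place, dividing them gives the claimed identity.
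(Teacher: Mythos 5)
Your overall route is the same as the paper's: you match the numerator via the motif--net correspondence (the cut-net of $C$ in $H_\mu$ equals the motif-cut of $C$ in $G$, with parallel-net merging accounted for by summed weights) and the denominator by showing $\mathfrc{d}_{\mathfrc{w}}(C)=d_\mu(C)$ and resolving the minimum through $C\subseteq S$, $\overline{S}\subseteq\overline{C}$ and $d_\mu(S)\le d_\mu(\overline{S})$; the paper does exactly this, merely factoring the numerator step into a separate cut-equivalence lemma.

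However, one step in your case analysis is stated incorrectly. A motif straddling $S$ and $\overline{S}$ is \emph{not} always cut by $(C,\overline{C})$, and it does not always contribute to $M'$: if all of its nodes in $S$ lie in $S\setminus C$ (for instance $C=\{u\}$ and the motif sits elsewhere in the ball), then the motif has no node in $C$, so it is not in $M'$, and the corresponding net has all of its pins (its $S$-pins and $r$) outside $C$, so it is not cut --- having the pin $r\in\overline{C}$ alone does not make a net cut. The correct statement for this category is an equivalence: a straddling motif lies in $M'$ iff it has a node in $C$, iff its net has a pin in $C$, iff that net is cut (and the merged net's weight counts exactly the represented motifs, all of which share the same $S$-intersection). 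Your conclusion $cut(C)=|M'|$ survives because in the exceptional subcase both sides receive no contribution, but as written the claim ``always cut \ldots contributes exactly one to $|M'|$'' is false and should be replaced by this equivalence. The remainder of the argument, including the merging bookkeeping and the pointwise identity $\mathfrc{d}_{\mathfrc{w}}(v)=d_\mu(v)$ for $v\in S$, is sound.
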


\ifFull

\begin{proof}
	From Theorem~\ref{theo:cut_equivalence}, the motif-cut of $P^\prime$ can be substituted by the cut-net of $P$ in the numerator of the definition of $\phi_\mu(C^\prime)$.  
	To complete the proof, it suffices to show that the denominator of $\phi_\mu(C^\prime)$, namely $min(d_\mu(C^\prime),d_\mu(\overline{C^\prime}))$, is equal to $\mathcal{d}_{\mathfrc{w}}(C)$.
	Due to the design of $H_\mu$, the values of $d_\mu(C^\prime)$ and $\mathcal{d}_{\mathfrc{w}}(C)$ are identical.
	Our assumption~$r \in \overline{C}$ leads to $\overline{S} \subseteq \overline{C^\prime}$ and $C^\prime \subseteq S$, which respectively imply $d_\mu(\overline{S}) \leq d_\mu(\overline{C^\prime})$ and $d_\mu(C^\prime) \leq d_\mu(S)$. 
	Since $d_\mu(S) \leq d_\mu(\overline{S})$, hence $\mathcal{d}_{\mathfrc{w}}(C) = d_\mu(C^\prime) \leq d_\mu(S) \leq d_\mu(\overline{S}) \leq d_\mu(\overline{C^\prime})$.
\end{proof}

\fi

\vspace*{-0.3cm}
\subsection{Flow Model.}
\label{subsec:Flow Model}

\begin{figure*}[t]
	\centering
	\includegraphics[width=.7\linewidth]{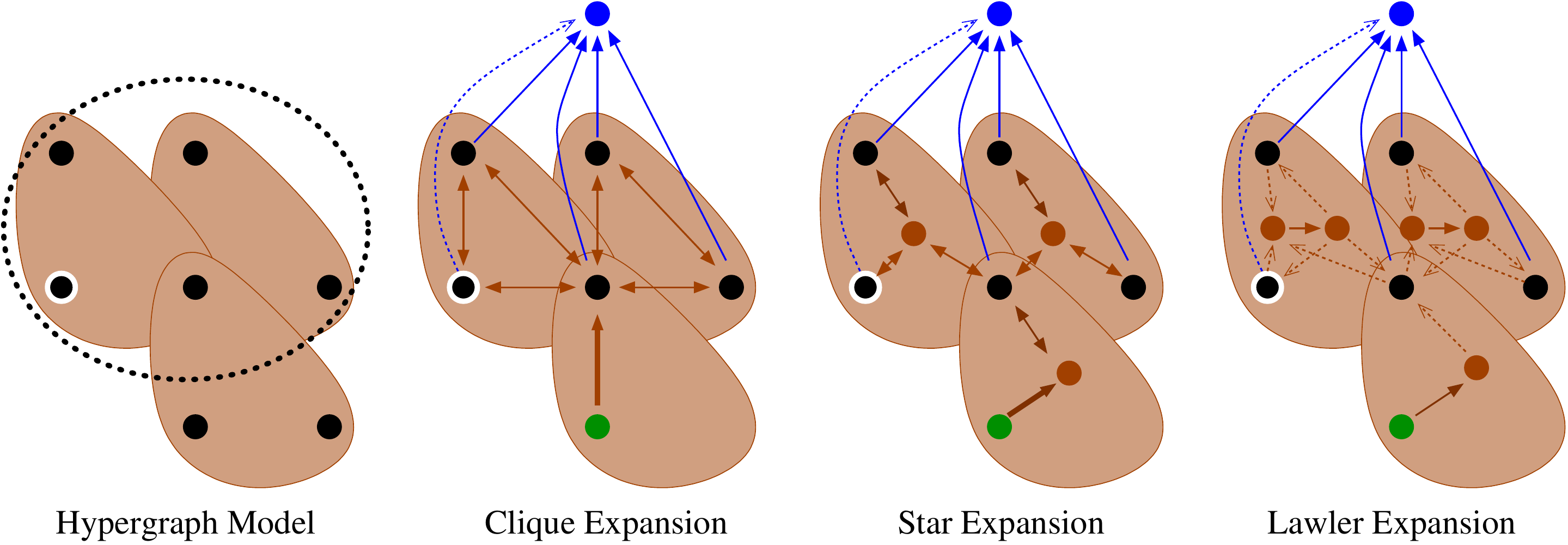}
	\caption{Flow model~$G_f$ given a hypergraph model~$H_\mu$ and an initial cluster~$C_0$.
		Nodes and nets of~$H_\mu$ are respectively represented by black circles and brown areas around them. 
		The seed node~$u$ is circled in white and the initial cluster~$C_0$ is surrounded by a dotted ellipse. 
		Auxiliary artificial nodes and edges used in each net-expansion are respectively represented by brown circles and arrows. 
		Bidirectional arrows represent pairs of edges in both directions.
		The seed node~$s$, the sink node~$t$, and the in-edges of~$t$ are respectively represented by a green circle, a blue circle, and blue arrows.
		Solid and dashed arrows respectively represent edges with finite and \hbox{infinite weight}.}
	\vspace*{-0.2cm}
	\label{fig:flow_model}
\end{figure*}

In this section, we describe the process of constructing our \texttt{MQI}-based flow model~$G_f$ using the hypergraph model~$H_\mu$ and an initial cluster~$C_0~\subseteq~S$ which contains the seed node~$u$. 
There are three possible implementations of~$G_f$ based on the three already explained techniques to represent hypergraphs using graphs, namely \emph{clique expansion}, \emph{star expansion}, and \emph{Lawler expansion} (see Figure~\ref{fig:hyperedge_expansion}).
We show a bijective correspondence between certain s-t cuts in $G_f$ and clusters~$C \subseteq C_0$~in~$G$ that include the seed node~$u$ and have motif conductance less~than~that~of~$C_0$.

We start by converting our hypergraph model~$H_\mu$ in a directed graph using the chosen net expansion technique.
Second, we find a corresponding cluster~$C_0^\prime$ for~$C_0$ in the created graph.
For the \emph{clique expansion}, $C_0^\prime~=~C_0$ since this transformation does not create artificial nodes.
For the \emph{star expansion},~$C_0^\prime$ consists of~$C_0$ and also the auxiliary artificial nodes connected to at least one node in~$C_0$.
For the \emph{Lawler expansion},~$C_0^\prime$ consists of~$C_0$, the auxiliary artificial nodes~$w_1$ having in-edges only from nodes in~$C_0$, and the auxiliary artificial nodes~$w_2$ having out-edges to at least one node in~$C_0$.
Third, we contract~$\overline{C_0^\prime}$ to a single \emph{source}~node~$s$ and then remove all its in-edges.
Fourth, we multiply the weight of all the remaining edges by $\mathfrc{d}_{\mathfrc{w}}(C_0)$, i.e., the weighted~volume of~$C_0$~in~$H_\mu$.
Fifth, we introduce a sink node~$t$ and include in-edges to it from each of the nodes~$v \in C_0\setminus\{u\}$, such that the weight of~$(v,t)$ is set to~$cut(C_{0})\mathfrc{d}_{\mathfrc{w}}(v)$, i.e., the cut-net of~$C_0$~in~$H_\mu$ multiplied by the weighted degree of~$v$~in~$H_\mu$.
Finally, we include an edge $(u,t)$ from the seed node to the sink and set its weight to infinity.
Our flow network model~$G_f$ is concluded by setting edge capacities to match edge weights.
Figure~\ref{fig:flow_model} shows the three possible configurations of our flow model~$G_f$ for a given hypergraph model~$H_f$ and an initial~cluster~$C_0$.

We now analyze the theoretical guarantees provided by the defined flow model~$G_f$.
Theorem~\ref{theo:flow_model_guarantee} shows that there is a set~${C\subset{C_0}}$ in~$G$ including the seed node~$u$ with motif conductance smaller than~that~of~$C_0$ if, and only if, the value of the maximum flow on $G_f$ is less than~$cut(C_{0})\mathfrc{d}_{\mathfrc{w}}(C_0)$, which is the weight of the trivial cut $(\{s\},V(G_f)\setminus\{s\})$.
In the proof, we show how such improved cluster~$C$ can be directly obtained from a maximum flow on~$G_f$.
For an even stronger claim, see Theorem~\ref{theo:flow_model_guarantee_gen} in the appendix. %
Assumptions (a)~and~(b) in Theorem~\ref{theo:flow_model_guarantee} are the same used in Theorem~\ref{theo:conductance_equivalence}, which were previously shown to be reasonable in practice.
Note that the claim is only valid for motifs with three nodes for clique and star expansion models, while it is valid in general for the Lawler expansion model.

\renewcommand{\labelenumi}{(\alph{enumi})}

\vspace*{-0.1cm}
\begin{theorem}
	There is a set~${C\subset{C_0}}$ in~$G$ including the seed node~$u$ with motif conductance smaller than that of~$C_0$ if, and only if, the maximum flow on~$G_f$ is less than $cut(C_{0})\mathfrc{d}_{\mathfrc{w}}(C_0)$ under the following assumptions:
	\begin{enumerate}
		\item the motif enumeration phase is exact;
		\item $d_\mu(S) \leq d_\mu(\overline{S})$~in~$G$;
		\item in case~$G_f$ is based on clique expansion or star expansion, the motif~$\mu$ has three nodes;
	\end{enumerate}
	\label{theo:flow_model_guarantee}
\end{theorem}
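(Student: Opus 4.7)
The plan is to carry out the classical MQI correspondence inside our three hypergraph expansions. By Theorem~\ref{theo:conductance_equivalence} together with assumptions~(a) and~(b), the existence of some $C \subset C_0$ with $u \in C$ and $\phi_\mu(C) < \phi_\mu(C_0)$ is equivalent to the existence of such a $C$ satisfying
\[
cut(C)\,\mathfrc{d}_{\mathfrc{w}}(C_0) \,<\, cut(C_0)\,\mathfrc{d}_{\mathfrc{w}}(C)
\]
in $H_\mu$. The target is therefore to show that for every $C \subseteq C_0$ with $u \in C$, the minimum weight of a finite s-t cut $(A,B)$ in $G_f$ satisfying $B \cap C_0 = C$ equals
\[
cut(C)\,\mathfrc{d}_{\mathfrc{w}}(C_0) \,+\, cut(C_0)\bigl(\mathfrc{d}_{\mathfrc{w}}(C_0) - \mathfrc{d}_{\mathfrc{w}}(C)\bigr).
\]
Setting $C = C_0$ recovers the trivial cut $(\{s\}, V(G_f)\setminus\{s\})$ of weight $cut(C_0)\,\mathfrc{d}_{\mathfrc{w}}(C_0)$. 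A direct rearrangement then shows that the maximum flow drops strictly below this bound iff some $C \subset C_0$ satisfies the displayed arithmetic inequality, which by max-flow/min-cut duality finishes the proof.

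First I would observe that the infinite edge $(u,t)$ forces $u \in B$ in every finite cut, so $u \in C$. I would then decompose the cut weight into two independent contributions. The in-edges of $t$ from $C_0 \setminus \{u\}$, each of weight $cut(C_0)\,\mathfrc{d}_{\mathfrc{w}}(v)$, are cut exactly for $v \in C_0 \setminus C$, giving the second term $cut(C_0)(\mathfrc{d}_{\mathfrc{w}}(C_0) - \mathfrc{d}_{\mathfrc{w}}(C))$. For the Lawler expansion, the only finite edge attached to a net $e$ is $(w_1^e, w_2^e)$, of weight $\mathfrc{w}(e)\,\mathfrc{d}_{\mathfrc{w}}(C_0)$ after the global scaling; the infinite pin-edges force $w_1^e$ into $A$ whenever $e$ has a pin in $\overline{C}$ (including pins merged into $s$) and $w_2^e$ into $B$ whenever $e$ has a pin in $C$. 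At the minimum over auxiliary placements, $(w_1^e, w_2^e)$ therefore contributes to the cut precisely when $e$ is a cut-net of $C$ in $H_\mu$, summing to the first term $cut(C)\,\mathfrc{d}_{\mathfrc{w}}(C_0)$. The contraction of $\overline{C_0^\prime}$ into $s$ and the removal of $s$'s in-edges cleanly absorb every edge incident to a pin in $\overline{C_0}$ without producing any extra finite contribution.

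To cover the clique and star expansions, I would invoke assumption~(c) and verify that any $3$-pin net yields the same contribution $\mathfrc{w}(e)$ under any nontrivial bipartition of its pins: two half-weight edges in the clique gadget, and a single full-weight directed edge from the singleton side to the centre in the star gadget (after minimising over the centre's placement). The rest of the derivation then transfers verbatim. The restriction $|\mu|=3$ is tight, because for larger nets the clique over-counts on balanced splits and the star over-counts by the size of the smaller side, destroying the correspondence. The main obstacle in writing the full proof will be the case analysis after the $\overline{C_0^\prime}$ contraction: across every combination of pin locations and auxiliary-node placements one must verify that no infinite edge is cut, that each finite contribution is attributable either to a cut-net of $C$ in $H_\mu$ or to a cut in-edge of $t$, and that the closed-form expression above is actually attained by an optimal placement of the auxiliary nodes.
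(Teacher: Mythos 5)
Your proposal is correct and takes essentially the same route as the paper: the same decomposition of any finite s-t cut into the sink-edge term $cut(C_0)\,\mathfrc{d}_{\mathfrc{w}}(C_0\setminus C)$ plus a per-cut-net term summing to $cut(C)\,\mathfrc{d}_{\mathfrc{w}}(C_0)$ (with the same expansion-by-expansion analysis of auxiliary-node placements and infinite edges, and the same use of assumption~(c) for clique/star), finished off via the Max-Flow Min-Cut Theorem and Theorem~\ref{theo:conductance_equivalence}. The only difference is organizational: you package both implications into a single closed-form expression for the minimum cut with $B\cap C_0=C$ (whose attainability corresponds to the paper's explicit choices of $B_2$ in the forward direction, and whose lower bound corresponds to the backward direction), rather than proving the two directions separately.
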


\begin{proof}
	We start with the backward direction, i.e., if the maximum flow on~$G_f$ is less than $cut(C_{0})\mathfrc{d}_{\mathfrc{w}}(C_0)$, then there is a sub-set of~${C_0}$ in~$G$ including the seed node~$u$ with motif conductance smaller than that of~$C_0$.
	According to the Max-Flow Min-Cut Theorem~\cite{ford_fulkerson_1956}, the weight of the maximum s-t flow on a network equals the weight of its minimum s-t cut, hence it follows that there is an s-t cut~$(B_1,B_2)$~of~$G_f$ with weight smaller than $cut(C_{0})\mathfrc{d}_{\mathfrc{w}}(C_0)$.
	Without loss of generality, let~$s \in B_1$, $t \in B_2$, $C = C_0 \cap B_2$, and hence, by definition,~${C_0\setminus{C}=C_0\cap{B_1}}$.
	Necessarily~${u\in{C}}$, otherwise the edge $(u,t)$, which has infinite weight, would be cut.
	There are two kinds of edges from $B_1$~to~$B_2$.
	First, there are the edges~$(x,t)$, with~$x \in C_0 \setminus C$.
	By design, the total weight of these edges is given by~$cut(C_{0})\mathfrc{d}_{\mathfrc{w}}(C_0\setminus{C})$.
	The second kind consists of edges from~$B_1$~to~${B_2\setminus\{t\}}$.
	These edges vary based on the net-expansion technique used, but their total weight is~$cut(C)\mathfrc{d}_{\mathfrc{w}}{({C}_0)}$ by design under \hbox{the~given~assumptions}. 
	
	Now we show that the total weight of the edges from~$B_1$~to~${B_2\setminus\{t\}}$ is~$cut(C)\mathfrc{d}_{\mathfrc{w}}{({C}_0)}$ for the three net-expansion techniques.
	In the clique expansion under assumption~(c), each cut net~$e$ of~$C$ in~$H_\mu$ corresponds directly to two cut in-edges of~$B_2$ in~$G_f$.
	The weight of each of these edges is, by design, set to $\mathfrc{w}(e)\mathfrc{d}_{\mathfrc{w}}{({C}_0)} / 2$, so they add up to the specified total weight.
	In the star expansion under assumption~(c), for each cut net~$e$ of~$C$ in~$H_\mu$ there is a single cut in-edge of~$B_2$ in~$G_f$ that connects an auxiliary artificial node and a node from $C_0$. 
	The weight of this cut edge is set by design to $\mathfrc{w}(e)\mathfrc{d}_{\mathfrc{w}}{({C}_0)}$, so the total weight of these edges is as claimed.
	In the Lawler expansion, for each cut net~$e$ of~$C$ in~$H_\mu$ there is exactly one cut in-edge of~$B_2$ in~$G_f$ that connects two auxiliary artificial nodes, namely~$w_1 \in B_1$~and~$w_2 \in B_2$. 
	If this were not the case, there would be an edge from~$B_1$~to~$B_2$ with infinite weight. 
	The weight of this single cut edge is set by design to $\mathfrc{w}(e)\mathfrc{d}_{\mathfrc{w}}{({C}_0)}$, so the sum of the weights of the cut edges is as stated.
	By adding up the weights of the two kinds of edges that cross the cut~$(B_1,B_2)$ and verifying that their total weight is less than $cut(C_{0})\mathfrc{d}_{\mathfrc{w}}(C_0)$, we derive Equation~(\ref{eq:forward_direction}), which can further be simplified to Equation~(\ref{eq:forward_direction_final}).
	We conclude the proof of the backward direction by applying Theorem~\ref{theo:conductance_equivalence}, Equation~(\ref{eq:forward_direction_final}), and the assumptions~(a)~and~(b).
	
	\vspace*{-0.2cm}
	\begin{align}
	\label{eq:forward_direction}
	cut(C_{0})\mathfrc{d}_{\mathfrc{w}}(C_0\setminus{C})~+
	&
	 \\ 
	cut(C)\mathfrc{d}_{\mathfrc{w}}{({C}_0)}
	& < cut(C_{0})\mathfrc{d}_{\mathfrc{w}}(C_0) \nonumber
	\end{align}
	
	\vspace*{-0.2cm}
	\begin{equation}
	\frac{cut(C)}{\mathfrc{d}_{\mathfrc{w}}(C)}
	<
	\frac{cut(C_0)}{\mathfrc{d}_{\mathfrc{w}}(C_0)}
	\label{eq:forward_direction_final}
	\end{equation}
	
	Now we prove the forward direction, i.e., given a set~$C~\subset~C_0$ including the seed node~$u$ in~$G$ with motif conductance smaller than~that~of~$C_0$, then the maximum flow on~$G_f$ is less than $cut(C_{0})\mathfrc{d}_{\mathfrc{w}}(C_0)$.
	With the assumptions~(a)~and~(b) in place, Equation~(\ref{eq:forward_direction_final}) holds true for the selected set~$C$.
	Since Equation~(\ref{eq:forward_direction_final}) can be rewritten as Equation~(\ref{eq:forward_direction}).
	To complete the proof, we will show that there exists an s-t~cut $(B_1,B_2)$ of $G_f$ such that $s \in B_1$, $t \in B_2$, $B_2 \cap C_0 = C$, and the total weight of the in-edges of $B_2$ is \hbox{$cut(C_{0})\mathfrc{d}{\mathfrc{w}}(C_0\setminus{C})+cut(C)\mathfrc{d}{\mathfrc{w}}{({C}_0)}$}.
	Using the Max-Flow Min-Cut Theorem~\cite{ford_fulkerson_1956} again, it follows that if there is an s-t cut with a weight of less than $cut(C_{0})\mathfrc{d}{\mathfrc{w}}(C_0)$, the maximum flow value on~$G_f$ must also be less than $cut(C_{0})\mathfrc{d}_{\mathfrc{w}}(C_0)$.
	Since $B_2 \cap C_0 = C$, it follows that $B_1 \cap C_0 = C_0 \setminus C$, hence there are $|C_0 \setminus C|$ cut edges of the form~$(x,t)$.
	By design, the total weight of these edges is $cut(C_{0})\mathfrc{d}{\mathfrc{w}}(C_0\setminus{C})$.
	
	Now we show that the weights of the remaining cut edges, i.e., edges from~$B_1$~to~${B_2\setminus\{t\}}$ add up to~$cut(C)\mathfrc{d}_{\mathfrc{w}}{({C}_0)}$ for each net-expansion technique. 
	In the clique expansion, we forcibly have $B_2=C \cup \{t\}$.
	Under assumption~(c), each cut net~$e$ of~$C$ in~$H_\mu$ corresponds to two cut in-edges of~$B_2$ in~$G_f$, both with weight set to $\mathfrc{w}(e)\mathfrc{d}_{\mathfrc{w}}{({C}_0)} / 2$ by design, hence they add up to the specified total weight.
	In the star expansion, under assumption~(c), we make $B_2 = C \cup A \cup \{t\}$, where~$A$ is the set of artificial nodes~$a$ with~$|N(a) \cap C| \geq |N(a)|/2$.
	For each cut net~$e$ of~$C$ in~$H_\mu$ there is a single cut in-edge of~$B_2$ in~$G_f$ that connects an auxiliary artificial node and a node from $C_0$. 
	The weight of this cut edge is set by design to $\mathfrc{w}(e)\mathfrc{d}_{\mathfrc{w}}{({C}_0)}$, so the total weight of these cut edges is as stated.
	In the Lawler expansion, we make~$B_2 = C \cup A \cup \{t\}$, where~$A$ consists of the auxiliary artificial nodes~$w_1$ having in-edges only from nodes in~$C$, and the auxiliary artificial nodes~$w_2$ having out-edges to at least one node in~$C$.
	Hence, for each cut net~$e$ of~$C$ in~$H_\mu$ there is exactly one cut in-edge of~$B_2$ in~$G_f$ that connects two auxiliary artificial nodes, namely~$w_1 \in B_1$~and~$w_2 \in B_2$. 
	The weight of this single cut edge is set by design to $\mathfrc{w}(e)\mathfrc{d}_{\mathfrc{w}}{({C}_0)}$, so~the~sum of the weights of the cut edges is as expected.
\end{proof}

\vspace*{+1.cm}
\texttt{SOCIAL} utilizes a push-relabel approach to iteratively search for a maximum s-t flow in the model~$G_f$.
If the found maximum flow is strictly smaller than $cut(C_{0})\mathfrc{d}_{\mathfrc{w}}(C_0)$, then we can directly find a minimum cut with the same weight as it and, consequently, a cluster~$C \subset C_0$ containing the seed node~$u$ that has a strictly smaller motif conductance value~$\phi_\mu(C)$ than that of~$C_0$~in~$G$. 
If such a cut is found, the algorithm repeats the process recursively setting the identified sub-cluster~$C$ as the new initial cluster, i.e., it constructs a new flow model based on~$H_\mu$ and the initial cluster and uses the push-relabel algorithm to continue searching for sub-clusters with even lower motif conductance values. 
If, on the other hand, the maximum flow is not strictly smaller than $cut(C_{0})\mathfrc{d}_{\mathfrc{w}}(C_0)$, it means that the current cluster~$C_0$ is optimal among all of its sub-clusters containing the seed node~$u$, and the \hbox{algorithm terminates for the given ball~$S$}.

\vfill

\section{Experimental Evaluation}
\label{sec:Experimental Evaluation}

\subsubsection*{Methodology.} 
We implemented \texttt{SOCIAL} in C++.
We compiled our program using gcc 11.2 with full optimization turned on (-O3 flag).
All our experiments are based on the triangle motif, i.e., the undirected clique of size three.
Since this motif has three nodes, Theorem~\ref{theo:flow_model_guarantee} is valid for all net expansion techniques. 
Therefore, we focus our experiments on the clique expansion technique, which is more efficient than the other techniques because it does not utilize any auxiliary artificial nodes and uses the minimum amount of auxiliary artificial edges.
We use the following parameters for \texttt{SOCIAL}: $\alpha=3$, $\ell \in \{1,2,3\}$.
We ensure the integrity of our results by using the same motif-conductance evaluator function for all tested algorithms.
In our experiments, we have used a machine with a sixty-four-core AMD EPYC 7702P processor running at $2.0$ GHz, $1$ TB of main memory, $32$ MB of L2-Cache, and $256$ MB of L3-Cache. 
We measure running time, motif-conductance, and/or size of the computed cluster.
For each graph, we pick $50$ random seed nodes and use all of them as input for each algorithm.
When averaging running time or cluster size over multiple instances, we use the geometric mean in order to give every instance the same influence on the \textit{final score}. 
When averaging motif conductance over multiple instances, the final score is computed via arithmetic mean.
This is a necessary averaging strategy since motif conductance can be zero, which makes the geometric mean infeasible to compute.
We also use \emph{performance profiles} which relate the running time (resp. motif conductance) of a group of algorithms to the fastest (resp. best) one on a per-instance basis.
Their x-axis shows a factor~$\tau$ while their y-axis shows the percentage of instances for which algorithm $A$ has up to~$\tau$ times the running time (resp. motif conductance) of the fastest (resp. best) algorithm.

\vspace*{-0.2cm}
\subsubsection*{Instances.}
The graphs used in our experiments are the same ones used by \hbox{\citet{yin2017local}} and \hbox{\citet{LocMotifClusHyperGraphPartition}} and the seed nodes used in our experiments are the same~ones~used~in~\cite{LocMotifClusHyperGraphPartition}.
Specifically, we use real graphs from the SNAP Network Dataset Collection \cite{snapnets}.
Prior to our experiments, we removed parallel edges, self-loops, and directions of edges and assigning unitary weight to all nodes and edges.
Basic properties of the graphs under consideration \hbox{can be found in Table~\ref{tab:graphs}}.

\begin{table}[t]
	\centering
	\setlength{\tabcolsep}{5.5pt}
	\footnotesize
	\begin{tabular}{ l  r  r  r  }
        \toprule
		Graph & $n$& $m$ & \# Triangles\\
	\midrule	
		com-amazon & \numprint{334863}  & \numprint{925872} & \numprint{667129} \\
		
		com-dblp & \numprint{317080}  & \numprint{1049866} & \numprint{2224385} \\
		
		com-youtube & \numprint{1134890}  & \numprint{2987624} & \numprint{3056386} \\
		
		com-livejournal & \numprint{3997962}  & \numprint{34681189} & \numprint{177820130} \\
		
		com-orkut & \numprint{3072441}  & \numprint{117185083} & \numprint{627584181} \\
		
		com-friendster & \numprint{65608366} & \numprint{1806067135} & \numprint{4173724142} \\		
                \bottomrule
		
	\end{tabular}
	\caption{Graphs for experiments.}
	\label{tab:graphs}
\end{table}

\begin{table*}[t]
	\centering
	\setlength{\tabcolsep}{8pt}
	\footnotesize
	\begin{tabular}{l@{\hskip 30pt}rrr@{\hskip 30pt}rrr@{\hskip 30pt}rrr}
	\toprule	
        \multirow{2}{*}{Graph} & \multicolumn{3}{c}{\hspace*{-1cm}\texttt{SOCIAL}}  & \multicolumn{3}{c}{\hspace*{-1cm}\texttt{LMCHGP}}    & \multicolumn{3}{c}{\hspace*{-1cm}\texttt{MAPPR}} \\
		& $\phi_\mu$     & $|C|$   & t(s)   & $\phi_\mu$     & $|C|$   & t(s)   & $\phi_\mu$  & $|C|$ & t(s)    \\ 
                \midrule

		com-amazon             & \textbf{0.031} & 76    & $<$0.01  & 0.037 & 64    & 0.22  & 0.153      & 58  & 2.68   \\
		com-dblp               & \textbf{0.090} & 58    & 0.02  & 0.115 & 56    & 0.38  & 0.289      & 35  & 3.04   \\
		com-youtube            & \textbf{0.125} & 1832  & 4.52  & 0.172 & 1443  & 7.93  & 0.910      & 2   & 10.44  \\
		com-livejournal        & \textbf{0.158} & 494   & 3.33  & 0.244 & 387   & 8.17  & 0.507      & 61  & 173.80 \\
		com-orkut              & 0.273 & 1041 & 256.21 & \textbf{0.150} & 13168 & 496.94 & 0.407      & 511 & 923.26 \\
		com-friendster         & 0.388      &  2060     & 1194.50       & \textbf{0.368}      &  10610     &   1339.99     &   0.741         &   121  &   16565.99      \\ 
                \midrule
		Overall                & \textbf{0.178} & 453   & 2.33  & 0.181 & 823   & 12.67  & 0.500      & 50  & 79.34  \\ 
                \bottomrule
	\end{tabular}
	\caption{Average comparison against state-of-the-art.}
	\label{tab:resultsoverall}
	\vspace*{-0.6cm}
\end{table*}

\vspace*{-0.2cm}
\subsubsection*{Competitors.}
We experimentally compare our \texttt{SOCIAL} against the state-of-the-art competitors, namely \texttt{MAPPR}~\cite{yin2017local} and the algorithm proposed by \hbox{\citet{LocMotifClusHyperGraphPartition}}.
For conciseness, we refer to the latter one from now on as~\mbox{\texttt{LMCHGP}},~an acronym for \emph{local motif clustering via (hyper)graph partitioning}.
We also ran preliminary experiments with \texttt{HOSPLOC}~\cite{zhou2021high}.
However, the algorithm very slow even for small graphs and not scalable as their algorithm works using an adjacency matrix and hence needs $\Omega(n^2)$ space and time. Moverover,  experiments done in their paper are on graphs that are  multiple orders of magnitude smaller than the graphs used in our evaluation.  Hence, we are not able to run the algorithm on the scale of the instances used in this work.  %
We were not able to explicitly compare against~\texttt{LCD-Motif}~\cite{zhang2019local} since their code is not available (neither public, nor privately\footnote{Personal communication with the authors}) and the data presented in the respective paper does not warrant explicit comparisons (e.g. seed nodes are typically not presented in the papers).
However, we try to make implicit comparisons \hbox{in Section~\ref{subsubsec:addcomp}}. %

We compare our results against the globally best cluster computed for each seed node by \texttt{MAPPR} using its standard parameters ($\alpha=0.98$, $\epsilon=10^{-4}$) and by \mbox{\texttt{LMCHGP}} using the configuration with best overall results in~\cite{LocMotifClusHyperGraphPartition} (graph model, label propagation, $\alpha=3$, $\ell \in \{1,2,3\}$, and $\beta=80$).
Unless mentioned otherwise, experiments presented here involve all \hbox{graphs~from~Table~\ref{tab:graphs}}.

\vspace*{-0.2cm}
\subsection{Results.}
\label{subsec:Results}

In this section we present and discuss our results.
In the performance profile plots shown in Figures~\ref{fig:SIMPLEstateoftheart_graph_res_pp}~and~\ref{fig:SIMPLEstateoftheart_graph_tim_pp}, we compare \mbox{\texttt{LMCHGP}}~\cite{LocMotifClusHyperGraphPartition} and \texttt{MAPPR}~\cite{yin2017local} against \texttt{SOCIAL}.
In Table~\ref{tab:resultsoverall}, we show average results for each graph in our Test Set \hbox{as well as average~results~overall}.

As shown in Figure~\ref{fig:SIMPLEstateoftheart_graph_res_pp}, \texttt{SOCIAL} obtains the best or equal motif conductance value for $62\%$ of the instances, while \mbox{\texttt{LMCHGP}} and \mbox{\texttt{MAPPR}} respectively obtain the best or equal motif conductance for $49\%$~and~$19\%$ of the instances.
This result can be explained with two observations.
First, \texttt{SOCIAL} explores the solution space considerably better than \texttt{MAPPR}, since we build our model multiple times, while \texttt{MAPPR} simply uses the  \texttt{APPR} algorithm.
Second, \texttt{SOCIAL} is based on a flow approach which directly optimizes for motif conductance, whereas \mbox{\texttt{LMCHGP}} is based on a (hyper)graph partitioning algorithm which is repeated multiple times to compensate for its design to minimize the number of cut motifs rather than motif conductance.
In Table~\ref{tab:resultsoverall}, \texttt{SOCIAL} outperforms \mbox{\texttt{LMCHGP}} for 4 of the 6 graph and overall, and outperforms \mbox{\texttt{MAPPR}} for all graphs and overall.
Overall, \texttt{SOCIAL} computes clusters with motif conductance $0.178$ while \mbox{\texttt{LMCHGP}} and \mbox{\texttt{MAPPR}} compute clusters with motif conductance $0.181$~and~$0.500$,~respectively.

\begin{figure}[t]{}
	\centering

	\vspace*{-.6cm}
	\includegraphics[width=0.9\linewidth]{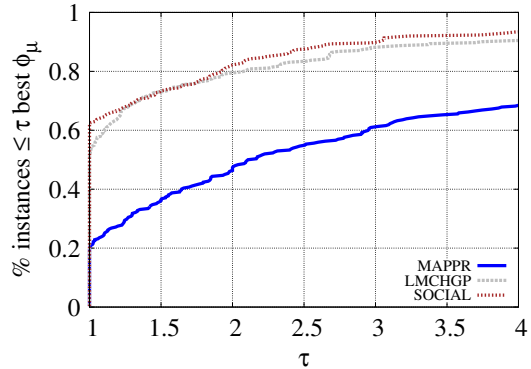}
	\vspace*{-0.75cm}
	\caption{Motif conductance performance profile.}
                \vspace*{-.75cm}
	\label{fig:SIMPLEstateoftheart_graph_res_pp}
\end{figure}

As exhibited in Figure~\ref{fig:SIMPLEstateoftheart_graph_tim_pp}, \texttt{SOCIAL} is the fastest one for $87\%$ of the instances, while \mbox{\texttt{LMCHGP}} and \mbox{\texttt{MAPPR}} are the fastest ones for $12\%$~and~$1\%$ of the instances, respectively.
Furthermore, the running time of \texttt{SOCIAL} is within a factor $1.18$ of the running times of the fastest competitors for all instances.
\texttt{SOCIAL} is respectively up to~\numprint{237}~and~\numprint{144063}~times faster than \mbox{\texttt{LMCHGP}} and \mbox{\texttt{MAPPR}} while being a factor~$5.4$~and~$34.1$ faster than them on average.
The reason for \texttt{MAPPR} being considerably slower than the other algorithms is that it must enumerate motifs across the entire graph, while \texttt{SOCIAL} and \texttt{LMCHGP} only require enumeration of motifs in a ball around the seed node.
The reduced but still substantial difference in running time between \texttt{SOCIAL} and \texttt{LMCHGP} is a result of \texttt{LMCHGP}'s repeated partitioning of each ball around the seed node, while \texttt{SOCIAL} employs a flow model to greedily improve the motif conductance metric until a local optimum cluster is obtained.
In Table~\ref{tab:resultsoverall}, \texttt{SOCIAL} outperforms \mbox{\texttt{LMCHGP}} and \mbox{\texttt{MAPPR}} on average in terms of running time \hbox{for every single graph~and~overall}.

For a more intuitive analysis of the quality of our results, Figure~\ref{fig:SIMPLEstateoftheart_graph_res_double} plots motif conductance versus cluster size for all communities computed by the three algorithms.
Observe that the communities found by \texttt{SOCIAL} are densely localized in the lower left area of the chart, which is the region with smaller motif conductance and smaller cluster size. 
On the other hand, communities computed by \texttt{MAPPR} are often in the upper area of the chart and communities computed by \mbox{\texttt{LMCHGP}} are often in the right area~of~the~chart.

\begin{figure}[t]{}
	\centering
	\includegraphics[width=0.9\linewidth]{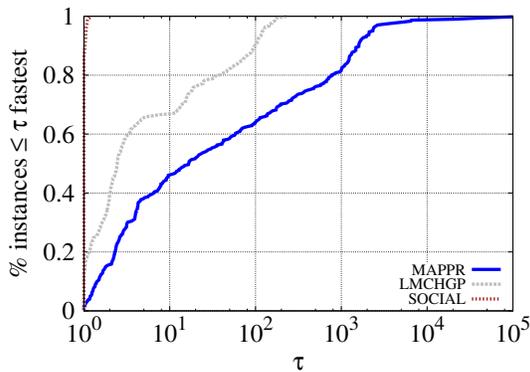}
	\vspace*{-0.5cm}
	\caption{Running time performance profile.}
	\vspace*{-0.25cm}
	\label{fig:SIMPLEstateoftheart_graph_tim_pp}
\end{figure}

\begin{figure}[t]{}
	\centering
	\includegraphics[width=0.9\linewidth]{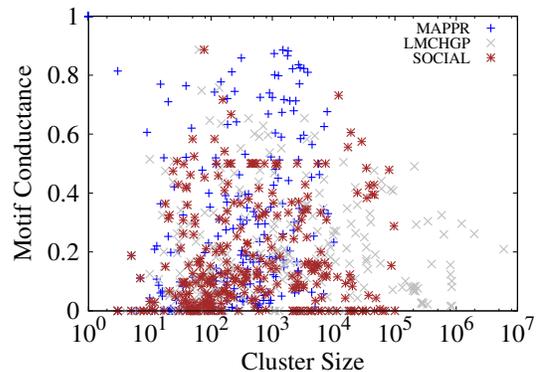}
	\vspace*{-0.5cm}
	\caption{Motif conductance vs cluster size.}
	\vspace*{-0.25cm}
	\label{fig:SIMPLEstateoftheart_graph_res_double}
\end{figure}

\subsubsection*{Additional Comparisons.}
\label{subsubsec:addcomp}
As mentioned above, we were not able to compare against \texttt{LCD-Motif}~\cite{zhang2019local} explicitly since their code is not available (neither publicly, nor privately) and the data presented in the respective papers does not warrant explicit comparisions (e.g.,~seed nodes are typically not presented in papers, and in this case instances are directed rather than undirected).
Here, we make an attempt at implicit comparisons.  Zhang~\etal~\cite{zhang2019local} (Table 4 therein) compare motif conductance against \texttt{MAPPR} on three directed instances (cit-HepPh, Slashdot, StanfordWeb) and report an geometric mean improvement of 54\% in motif conductance for the triangle motif. 
As \texttt{SOCIAL} works for undirected instances, we have build undirected version of those graphs and run \texttt{SOCIAL} as well as \texttt{MAPPR} for the triangle motif. The geometric mean improvement we obtain over \texttt{MAPPR} is 223\% %
which is significantly larger than the improvement of Zhang~\etal over \texttt{MAPPR}. Also note that in our experiments from Table~\ref{tab:resultsoverall}, the geometric mean improvement (using the average motif conductance values) of \texttt{SOCIAL} over \texttt{MAPPR} in \hbox{motif conductance is 219\%}.%

\section{Conclusion}
\label{sec:Conclusion}

In this work, we propose \texttt{SOCIAL}, a fast flow-based algorithm to solve the local motif clustering problem in graphs. 
Given a seed node, our \texttt{SOCIAL} selects a ball of nodes around it, which is taken as an initial cluster and used to build an exact hypergraph model where nets represent motifs.
Using this model and the initial cluster, we create a flow model in which the value of the maximum s-t flow is directly related to the presence of sub-sets of the initial cluster that contain the seed node and have lower motif conductance than the initial cluster as a whole.
Utilizing a push-relabel algorithm, \texttt{SOCIAL} either identifies a sub-cluster containing the seed node with improved motif conductance and repeats the process recursively by considering it as the initial cluster, or demonstrates that the current initial cluster is the best among all its sub-clusters \hbox{that include the seed node}.

In our experiments with the triangle motif, we found that \texttt{SOCIAL} produces communities with an average motif conductance better than the state-of-the-art, while running up to orders of magnitude faster~on~average.
Given the good results of our algorithm, we plan to release it as open source soon.
For future work, we intend to conduct experiments with larger motifs and use the Lawler-expansion version of our flow graph, since it is the only one whose quality guarantee holds true for larger motifs. Laslty, we intend to add parallelization to improve the speed on large instances further.

\vfill
\pagebreak

\bibliographystyle{ACM-Reference-Format}
\bibliography{phdthesiscs}

\appendix
\section{A More General Theorem.}
In this section, we provide Theorem~\ref{theo:flow_model_guarantee_gen}, which establishes a one-to-one correspondence between sets~${C\subset{C_0}}$ in~$G$ including the seed node~$u$ with motif conductance smaller than~that~of~$C_0$ and cuts of $G_f$ with weight less than~$cut(C_{0})\mathfrc{d}_{\mathfrc{w}}(C_0)$, which is the weight of the trivial cut $(\{s\},V(G_f)\setminus\{s\})$.
This correspondence depends upon the same assumptions made in Theorem~\ref{theo:conductance_equivalence}, which were previously shown to be reasonable in practice.
Nevertheless, note that the claim is only valid for motifs with three nodes for clique expansion and star expansion-based models, while it is valid in general for the Lawler expansion-based model.
Although the model~$G_f$ based on Lawler expansion requires the largest number of auxiliary artificial nodes, its exactness for motifs of any size makes it a powerful and widely applicable approach for local motif clustering.
It is worth mentioning that assumption~(d) in Theorem~\ref{theo:flow_model_guarantee_gen} does not restrict the validity of the star expansion model. 
This is because any cut of the model with finite weight either satisfies this assumption or can be slightly altered to comply with it without moving any nodes from~$C_0$ to a different block and while strictly decreasing the total weight of the cut.

\begin{theorem}
	The collection of s-t cuts~$(B_1,B_2)$ of~$G_f$ where the total weight of the edges from~$B_1$~to~$B_2$ is smaller than $cut(C_{0})\mathfrc{d}_{\mathfrc{w}}(C_0)$ and the collection of sets~${C\subset{C_0}}$ in~$G$ including the seed node~$u$ with motif conductance smaller than that of~$C_0$ are bijectively related under the following assumptions:
	\begin{enumerate}
		\item the motif enumeration phase is exact;
		\item $d_\mu(S) \leq d_\mu(\overline{S})$~in~$G$;
		\item in case~$G_f$ is based on clique expansion or star expansion, the motif~$\mu$ has three nodes;
		\item in case~$G_f$ is based on star expansion, artificial nodes are in blocks including most of its neighbors.
	\end{enumerate}
	\label{theo:flow_model_guarantee_gen}
\end{theorem}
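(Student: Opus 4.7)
The plan is to promote the two one-directional implications from Theorem~\ref{theo:flow_model_guarantee} into mutually inverse maps, and then to verify uniqueness of the auxiliary-node placement separately for each of the three net-expansion variants. Concretely, I would define a forward map $\Phi$ that sends a qualifying cut $(B_1,B_2)$ (with $s\in B_1$, $t\in B_2$, and total weight strictly below $cut(C_0)\mathfrc{d}_{\mathfrc{w}}(C_0)$) to the set $C\Is C_0\cap B_2$, and a backward map $\Psi$ that sends a qualifying subset $C\subset C_0$ containing $u$ to the cut constructed exactly as in the forward direction of Theorem~\ref{theo:flow_model_guarantee}. Well-definedness of $\Phi$ and $\Psi$ is already established there: the backward direction of that theorem shows that $\Phi(B_1,B_2)$ has smaller motif conductance than $C_0$, and the forward direction shows that $\Psi(C)$ has weight $cut(C_0)\mathfrc{d}_{\mathfrc{w}}(C_0\setminus C)+cut(C)\mathfrc{d}_{\mathfrc{w}}(C_0)$, which under the assumption $\phi_\mu(C)<\phi_\mu(C_0)$ is strictly less than $cut(C_0)\mathfrc{d}_{\mathfrc{w}}(C_0)$ by the same algebra used in that proof.

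The identity $\Phi\circ\Psi=\mathrm{id}$ is immediate from the construction of $\Psi$: by design $B_2\cap C_0 = C$. The substantive content is therefore $\Psi\circ\Phi=\mathrm{id}$, which reduces to showing that any qualifying cut $(B_1,B_2)$ is completely determined by the restriction $C=B_2\cap C_0$ — i.e., the side on which each auxiliary node sits is forced. I would handle the three expansion cases in turn. For the clique expansion there are no auxiliary nodes, so $B_1=\{s\}\cup(C_0\setminus C)$ and $B_2=C\cup\{t\}$ are literally the only possibilities, and $\Psi(\Phi(B_1,B_2))=(B_1,B_2)$ holds trivially. For the Lawler expansion, the two infinite-capacity edges $(p,w_1)$ and $(w_2,p)$ attached to every pin $p$ of each net $e$ prevent any qualifying (finite-weight) cut from separating $w_1$ from pins on the source side or $w_2$ from pins on the sink side; this forces $w_1\in B_1$ iff some pin of $e$ lies in $B_1$ (equivalently, $e$ has a pin in $C_0\setminus C$ or in $\overline{C_0}$), and $w_2\in B_2$ iff some pin of $e$ lies in $B_2$. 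Both conditions are functions of $C$ alone, which yields the required uniqueness and, as a bonus, the clean identification of the cut nets of $C$ with the edges $(w_1,w_2)$ that cross the cut.

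For the star expansion, assumption (d) is precisely the tie-breaking rule that removes the residual ambiguity: for each artificial node $a_e$ associated to a three-pin net $e$, $a_e\in B_2$ iff a strict majority of its pins lie in $B_2$ (and a symmetric statement for $B_1$). Since pins in $\overline{C_0}$ are in $B_1$ (contracted to $s$), pins in $C_0\setminus C$ are in $B_1$, and pins in $C$ are in $B_2$, the side of each $a_e$ is a deterministic function of $C$. I would justify the claim in the remark after the theorem statement — that assumption (d) is essentially free — by noting that for a three-pin net either the majority rule uniquely places $a_e$, or else $a_e$ can be flipped to strictly reduce the cut weight without moving any original-graph node, contradicting the hypothesis that the cut was qualifying in the first place. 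The combination of these per-variant arguments establishes that the auxiliary coordinates of any qualifying cut are forced by $C$, and hence $\Psi(\Phi(B_1,B_2))=(B_1,B_2)$.

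The main obstacle I expect is the star-expansion case, because unlike Lawler the placement of the artificial nodes is not pinned down by infinite capacities; one really has to invoke assumption (d) and justify that it is compatible with the class of cuts being parametrized. The clique and Lawler cases are essentially bookkeeping on top of Theorem~\ref{theo:flow_model_guarantee}, and once the uniqueness of auxiliary placement is in hand the bijection is immediate.
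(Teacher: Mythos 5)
Your proposal is correct and follows essentially the same route as the paper's own proof: it reuses the two directions of Theorem~\ref{theo:flow_model_guarantee} and then argues, per expansion type, that the placement of the auxiliary nodes is determined by $C=C_0\cap B_2$ (trivially for clique, via the infinite-capacity edges for Lawler, and via assumption~(d) as the tie-breaker for star). Your explicit packaging as mutually inverse maps $\Phi$ and $\Psi$ is just a cleaner organization of the argument the paper gives, including the same justification that assumption~(d) is harmless because flipping a misplaced star node strictly decreases the cut weight.
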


\begin{proof}
	Without loss of generality, let~$s \in B_1$~and~$t \in B_2$.
	Let~$C = C_0 \cap B_2$ hence, by definition,~${C_0\setminus{C}=C_0\cap{B_1}}$.

	We start with the forward direction, i.e., given any s-t cut~$(B_1,B_2)$~of~$G_f$ with weight smaller than $cut(C_{0})\mathfrc{d}_{\mathfrc{w}}(C_0)$, we can obtain a unique set~$C~\subset~C_0$ in~$G$, including~$u$, with motif conductance smaller than~that~of~$C_0$.
	Let~$C = C_0 \cap B_2$ hence, by definition,~${C_0\setminus{C}=C_0\cap{B_1}}$.
	Necessarily~${u\in{C}}$, otherwise the edge $(u,t)$, which has infinite weight, would be cut.
	There are two kinds of edges from $B_1$~to~$B_2$.
	First, there are the edges~$(x,t)$, with~$x \in C_0 \setminus C$.
	By design, the total weight of these edges is given by~$cut(C_{0})\mathfrc{d}_{\mathfrc{w}}(C_0\setminus{C})$.
	The second kind consists of edges from~$B_1$~to~${B_2\setminus\{t\}}$.
	These edges vary based on the net-expansion technique used, but their total weight is~$cut(C)\mathfrc{d}_{\mathfrc{w}}{({C}_0)}$ by design under the given assumptions. 
	
	Now we show that the total weight of the edges from~$B_1$~to~${B_2\setminus\{t\}}$ is~$cut(C)\mathfrc{d}_{\mathfrc{w}}{({C}_0)}$ for the three net-expansion techniques.
	In the clique expansion under assumption~(c), each cut net~$e$ of~$C$ in~$H_\mu$ corresponds directly to two cut in-edges of~$B_2$ in~$G_f$.
	The weight of each of these edges is, by design, set to $\mathfrc{w}(e)\mathfrc{d}_{\mathfrc{w}}{({C}_0)} / 2$, so they add up to the specified total weight.
	In the star expansion under assumptions~(c)~and~(d), for each cut net~$e$ of~$C$ in~$H_\mu$ there is a single cut in-edge of~$B_2$ in~$G_f$ that connects an auxiliary artificial node and a node from $C_0$. 
	The weight of this cut edge is set by design to $\mathfrc{w}(e)\mathfrc{d}_{\mathfrc{w}}{({C}_0)}$, so the total weight of these edges is as claimed.
	In the Lawler expansion, for each cut net~$e$ of~$C$ in~$H_\mu$ there is exactly one cut in-edge of~$B_2$ in~$G_f$ that connects two auxiliary artificial nodes, namely~$w_1 \in B_1$~and~$w_2 \in B_2$. 
	If this were not the case, there would be an edge from~$B_1$~to~$B_2$ with infinite weight. 
	The weight of this single cut edge is set by design to $\mathfrc{w}(e)\mathfrc{d}_{\mathfrc{w}}{({C}_0)}$, so the sum of the weights of the cut edges is as stated.
	By adding up the weights of the two kinds of edges that cross the cut~$(B_1,B_2)$ and verifying that their total weight is less than $cut(C_{0})\mathfrc{d}_{\mathfrc{w}}(C_0)$, we derive Equation~(\ref{eq:forward_direction}), which can further be simplified to Equation~(\ref{eq:forward_direction_final}).
	We conclude the proof of the forward direction of the bijection by applying Theorem~\ref{theo:conductance_equivalence}, Equation~(\ref{eq:forward_direction_final}), and the assumptions~(a)~and~(b).
	
	Now we prove the backward direction of the bijection, i.e., given any set~$C~\subset~C_0$ including the seed node~$u$ in~$G$ with motif conductance smaller than~that~of~$C_0$, we can obtain a unique s-t cut~$(B_1,B_2)$~of~$G_f$ with weight smaller than $cut(C_{0})\mathfrc{d}_{\mathfrc{w}}(C_0)$.
	With the assumptions~(a)~and~(b) in place, Equation~(\ref{eq:forward_direction_final}) holds true for the selected set~$C$.
	Since Equation~(\ref{eq:forward_direction_final}) can be rewritten as Equation~(\ref{eq:forward_direction}), to complete the proof we must show that there exists a unique set $B_2$ such that $B_2 \cap C_0 = C$ and the total weight of the in-edges of $B_2$ is $cut(C_{0})\mathfrc{d}{\mathfrc{w}}(C_0\setminus{C}) + cut(C)\mathfrc{d}{\mathfrc{w}}{({C}_0)}$.
	Since $B_2 \cap C_0 = C$, we can deduce that $B_1 \cap C_0 = C_0 \setminus C$, hence there are $|C_0 \setminus C|$ cut edges of the form~$(x,t)$.
	By design, the total weight of these edges is $cut(C_{0})\mathfrc{d}{\mathfrc{w}}(C_0\setminus{C})$.
	
	Now we show that the weights of the remaining cut edges, i.e., edges from~$B_1$~to~${B_2\setminus\{t\}}$ add up to~$cut(C)\mathfrc{d}_{\mathfrc{w}}{({C}_0)}$ for each net-expansion technique. 
	In the clique expansion, we forcibly have $B_2=C \cup \{t\}$.
	Under assumption~(c), each cut net~$e$ of~$C$ in~$H_\mu$ corresponds to two cut in-edges of~$B_2$ in~$G_f$, both with weight set to $\mathfrc{w}(e)\mathfrc{d}_{\mathfrc{w}}{({C}_0)} / 2$ by design, hence they add up to the specified total weight.
	In the star expansion, under assumptions~(c)~and~(d), we forcibly have $B_2 = C \cup A \cup \{t\}$, where~$A$ is the set of artificial nodes~$a$ with~$|N(a) \cap C| \geq |N(a)|/2$.
	For each cut net~$e$ of~$C$ in~$H_\mu$ there is a single cut in-edge of~$B_2$ in~$G_f$ that connects an auxiliary artificial node and a node from $C_0$. 
	The weight of this cut edge is set by design to $\mathfrc{w}(e)\mathfrc{d}_{\mathfrc{w}}{({C}_0)}$, so the total weight of these cut edges is as stated.
	In the Lawler expansion, we forcibly have~$B_2 = C \cup A \cup \{t\}$, where~$A$ consists of the auxiliary artificial nodes~$w_1$ having in-edges only from nodes in~$C$, and the auxiliary artificial nodes~$w_2$ having out-edges to at least one node in~$C$.
	Hence, for each cut net~$e$ of~$C$ in~$H_\mu$ there is exactly one cut in-edge of~$B_2$ in~$G_f$ that connects two auxiliary artificial nodes, namely~$w_1 \in B_1$~and~$w_2 \in B_2$. 
	The weight of this single cut edge is set by design to $\mathfrc{w}(e)\mathfrc{d}_{\mathfrc{w}}{({C}_0)}$, so the sum of the weights of the cut edges is as expected.
\end{proof}

\end{document}